\numberwithin{equation}{section}
\newcommand*{\myfont}{\fontfamily{phv}\selectfont} 
\newtheorem{theorem}{Theorem}[section]
\newtheorem{lemma}[theorem]{Lemma}
\newtheorem{proposition}[theorem]{Proposition}
\newtheorem{remark}[theorem]{Remark}
\renewenvironment{proof}{\noindent {\bf Proof.} }{\endprf\par}
\def \endprf{\hfill {\vrule height6pt width6pt depth0pt}\medskip}
\renewcommand{\mathbf}{\boldsymbol}
\newcommand{\mb}{\mathbf}
\newcommand{\mr}{\mathrm}
\newcommand{\mc}{\mathcal}
\newcommand{\bb}{\mathbb}
\newcommand{\R}{\bb R}
\newcommand{\set}[1]{\left\{ #1 \right\}}
\newcommand{\Brac}[1]{\left\lbrace #1 \right\rbrace}
\newcommand{\brac}[1]{\left[ #1 \right]}
\newcommand{\paren}[1]{ \left( #1 \right) }
\newcommand{\wh}{\widehat}
\newcommand{\wt}{\widetilde}
\newcommand\widecheck[1]{%
\savestack{\tmpbox}{\stretchto{%
  \scaleto{%
    \scalerel*[\widthof{\ensuremath{#1}}]{\kern-.6pt\bigwedge\kern-.6pt}%
    {\rule[-\textheight/2]{1ex}{\textheight}}
  }{\textheight}%
}{0.5ex}}%
\stackon[1pt]{#1}{\scalebox{-1}{\tmpbox}}%
}
\newcommand{\wc}{\widecheck}
\def\-{\raisebox{.75pt}{-}} 
\DeclareMathOperator{\sign}{sign}
\newcommand{\norm}[2]{\left\| #1 \right\|_{#2}}
\newcommand{\abs}[1]{\left| #1 \right|}
\newcommand{\innerprod}[2]{\left\langle #1,  #2 \right\rangle}
\newcommand{\prob}[1]{\bb P\left[ #1 \right]}
\newcommand{\eps}{\varepsilon}
\newcommand{\E}{\bb E}
\newcommand{\simiid}{\sim_{\mr{i.i.d.}}}
\newcommand{\1}{\mathbf 1}
\newcommand*\acr[1]{\textscale{.85}{#1}}
\newcommand*\bull{\raisebox{-0.365em}[-1em][-1em]{\textscale{4}{$\cdot$}} }
\definecolor{orcidlogocol}{HTML}{A6CE39}
\tikzset{
  orcidlogo/.pic={
    \fill[orcidlogocol] svg{M256,128c0,70.7-57.3,128-128,128C57.3,256,0,198.7,0,128C0,57.3,57.3,0,128,0C198.7,0,256,57.3,256,128z};
    \fill[white] svg{M86.3,186.2H70.9V79.1h15.4v48.4V186.2z}
                 svg{M108.9,79.1h41.6c39.6,0,57,28.3,57,53.6c0,27.5-21.5,53.6-56.8,53.6h-41.8V79.1z M124.3,172.4h24.5c34.9,0,42.9-26.5,42.9-39.7c0-21.5-13.7-39.7-43.7-39.7h-23.7V172.4z}
                 svg{M88.7,56.8c0,5.5-4.5,10.1-10.1,10.1c-5.6,0-10.1-4.6-10.1-10.1c0-5.6,4.5-10.1,10.1-10.1C84.2,46.7,88.7,51.3,88.7,56.8z};
  }
}
\newcommand\orcidicon[1]{\href{https://orcid.org/#1}{\mbox{\scalerel*{
\begin{tikzpicture}[yscale=-1,transform shape]
\pic{orcidlogo};
\end{tikzpicture}
}{|}}}}
\begin{document}

\title{Compressed Sensing Microscopy with Scanning Line Probes}
\author{\IEEEauthorblockN{Han-Wen Kuo\textsuperscript{\orcidicon{0000-0003-2989-5218}}, 
Anna E. Dorfi\textsuperscript{\orcidicon{0000-0002-8045-3311}},
Daniel V. Esposito\textsuperscript{\orcidicon{0000-0002-0550-801X}} and  
John N. Wright\textsuperscript{\orcidicon{0000-0003-2683-4428}}, \IEEEmembership{Member, IEEE} }  

\thanks{Manuscript received September 25, 2019. The authors would like to acknowledge funding support from the Columbia University SEAS Interdisciplinary Research Seed (SIRS) Funding program, and NSF CDS\&E 1710400. (\emph{Corresponding author: Han-Wen Kuo}). }%

\thanks{H.-W. Kuo and J. N. Wright are with Department
of Electrical Engineering and Data Science Institute, Columbia University, New York, NY, 10025. Email: hk2673@columbia.edu and jw2966@columbia.edu.}%
\thanks{A. E. Dorfi and D. V. Esposito are with Department of Chemical Engineering, Electrochemical Energy Center and Lenfest Center for Sustainable Energy, Columbia University, New York, NY, 10025. Email: de2300@columbia.edu and aed2521@columbia.edu.}
}

\markboth{}{Kuo \MakeLowercase{\textit{et al.}}: Compressed Sensing Microscopy with Scanning Line Probes}

\maketitle

\begin{abstract} 
In applications of scanning probe microscopy, images are acquired by raster scanning a point probe across a sample. Viewed from the perspective of compressed sensing (CS), this pointwise sampling scheme is inefficient, especially when the target image is structured. While replacing point measurements with delocalized, incoherent measurements has the potential to yield order-of-magnitude improvements in scan time, implementing the delocalized measurements of CS theory is challenging. In this paper we study a partially delocalized probe construction, in which the point probe is replaced with a continuous line, creating a sensor which essentially acquires line integrals of the target image. We show through simulations, rudimentary theoretical analysis, and experiments, that these line measurements can image sparse samples far more efficiently than traditional point measurements, provided the local features in the sample are enough separated. Despite this promise, practical reconstruction from line measurements poses additional difficulties: the measurements are partially coherent, and real measurements exhibit nonidealities. We show how to overcome these limitations using natural strategies (reweighting to cope with coherence, blind calibration for nonidealities), culminating in an end-to-end demonstration. 
\end{abstract}

\begin{IEEEkeywords}
compressed sensing \bull scanning probe microscopy \bull nonlocal scanning probe \bull tomography \bull sparse recovery.
\end{IEEEkeywords}

 

\section{Introduction}

\IEEEPARstart{S}{canning} probe microscopy (SPM) is a fundamental technique for imaging interactions between a probe and the sample of interest.  Unlike traditional optical microscopy, the resolution achievable by SPM is not constrained by the diffraction limit, making SPM especially advantageous for nanoscale, or atomic level imaging, which has widespread applications in chemistry, biology and materials science \cite{wiesendanger1994scanning}. Conventional implementations of SPM typically adopt a raster scanning strategy, which utilizes a probe with small and sharp tip, to form a pixelated heatmap image via point-by-point measurements from interactions between the probe tip and the surface. Despite its capability of nanoscale imaging, SPM with point measurements is inherently slow, especially when scanning a large area or producing high-resolution images.

When the target signal is highly structured, compressed sensing (CS) \cite{donoho2006compressed, candes2008introduction, foucart2017mathematical} suggests it is possible to design a data acquisition scheme in which the number of measurements is largely dependent on the signal complexity, instead of the signal size, from which the signal can be efficiently reconstructed algorithmically. In nanoscale microscopy, images are often spatially sparse and structured. {CS} theory suggests for such signals, localized measurements such as pointwise samples are inefficient. In contrast, delocalized, spatially spread measurements are better suited for reconstructing a sparse image.

\begin{figure}[t!]
	\centering
\begin{tikzpicture}
	\node[rectangle, draw=black, inner sep=0.7pt, line width = 0.7pt] at (0in, 0in){\includegraphics[width=0.22\textwidth,height =0.17\textwidth]{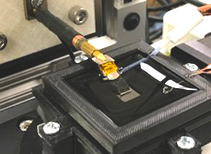}};
	\node[rectangle, draw=black, inner sep=0.7pt, line width = 0.7pt] at (0.25\textwidth,0in) {\includegraphics[width=0.22\textwidth,height =0.17\textwidth]{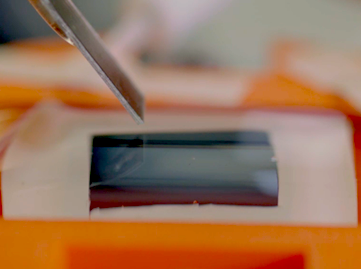}};
\end{tikzpicture}
	
	\caption{Scanning electrochemical microscope with continuous line probe. Left: the lab made {SECM} device with line probe, mounted on an automated probe arms with a rotating sample stage. Right: closeup side view of the line probe near the sample surface. }\label{fig:microscope}  
\end{figure}

Unfortunately, the dense (delocalized) sensing schemes suggested by {CS} theory (and used in other applications, e.g., \cite{lustig2008compressed,studer2012compressive,veeraraghavan2011coded}) are challenging to implement in the settings of micro/nanoscale imaging. Motivated by these concerns, \cite{o2018scanning} introduced a new type of {\em semilocalized} probe, known as a {\em line probe}, which integrates the signal intensity along a straight line, and studied it in the context of a particular microscopy modality known as scanning electrochemical microscopy (SECM) \cite{bard1980electrochemical, bard1991chemical}. In {SECM} with \emph{line probe}, the working end of the probe consists of a straight line, which produces a single measurement by collecting accumulated redox reaction current induced by the probe and sample. These line measurements are semilocalized, sample a spatially sparse image more efficiently than measurements from point probes, and ``have an edge'' for high resolution imaging  since a thin and sharp line probe can be manufactured with ease. Moreover, experiments in \cite{o2018scanning} suggest that a combination of line probes and compressed sensing reconstruction could potentially yield order-of-magnitude reductions in imaging time for sparse samples.

\begin{figure*}[htp!]  
	\centering 
	\begin{tikzpicture}  
		\sffamily{
		\fill[blue!13!white,draw=black] (0,0) rectangle (10.8,3);     
		\node at (0.6,1.5) {\large$\begin{bmatrix}
			\theta_1 \\ \theta_2  \\ \vdots \\ \theta_m  
		\end{bmatrix}$};      
		\draw[line width=0.8pt,->,double] (1.1,1.5) -- (1.8,1.5);
		\node[align=center] at (1.45,2.2) { \scriptsize input  \\[-0.14cm]  \scriptsize scan \\[-0.14cm]   \scriptsize angles};   
		\node at (2.4,1.5) {\includegraphics[width=1cm, height=2cm]{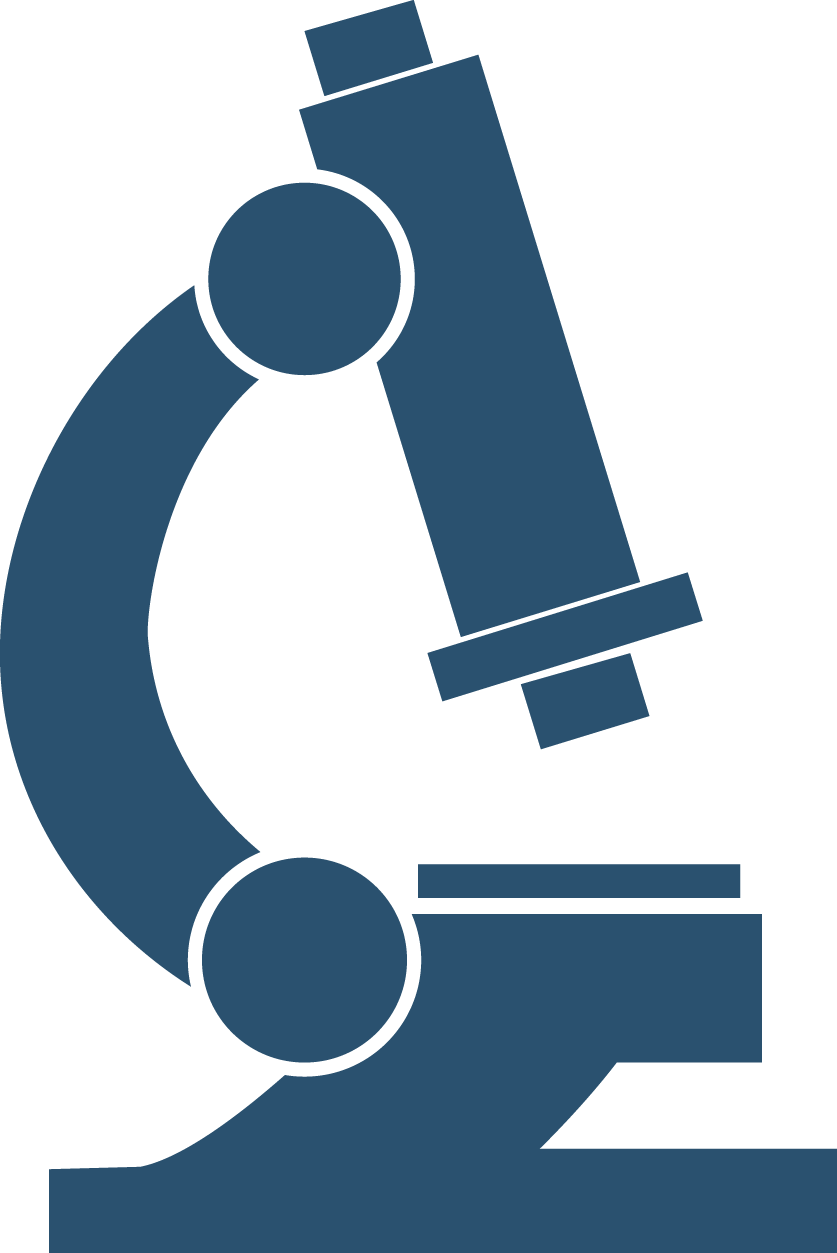}};
		\draw[line width=0.8pt,->,double] (2.9,1.5) -- (3.6,1.5); 
		\node[align=center] at (3.25,2.2) { \scriptsize  {scan}  \\[-0.14cm]  \scriptsize {$m$ lines}};  
		\node at (4.7,1.3)  {\includegraphics[width=2cm]{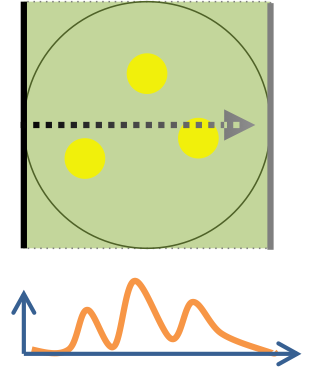}};
		\node at (4.35,2.32){\scriptsize {\color[RGB]{100,100,100} \myfont{sample} } };  
		\node[align=right] at (3.45,1) { \scriptsize {\color[RGB]{100,100,100} \myfont{line}} \\[-0.14cm] \scriptsize {\color[RGB]{100,100,100}{probe}}};
		\node[align=center] at (4.6,1.35) { \scriptsize {\color[RGB]{100,100,100} {scan}} \\[-0.14cm] \scriptsize  {\color[RGB]{100,100,100} {path}}}; 
		\draw[line width = 0.8pt,->] (5.5,2.0) to[out=90, in=180] (5.7,2.3) to[out=0, in=90](5.9,2.0); 
		\node[align=center] at (5.7,2.65) {\scriptsize {rotate} \\[-0.2cm] \tiny$\theta_2\!-\!\theta_1$}; 
		\node at (6.8,1.43) {\includegraphics[width=2.1cm]{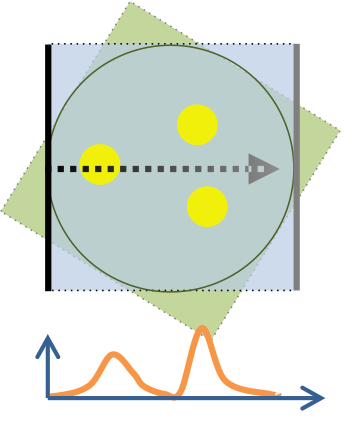}};
		\draw[line width = 0.8pt,->] (7.75,2.0) to[out=90, in=180] (7.95,2.3) to[out=0, in=90](8.15,2.0); 
		\node[align=center] at (7.95,2.65) {\scriptsize {rotate} \\[-0.2cm] \tiny$\theta_3\!-\!\theta_2$}; 
		\node at (9.1,1.43)  {\includegraphics[width=2.1cm]{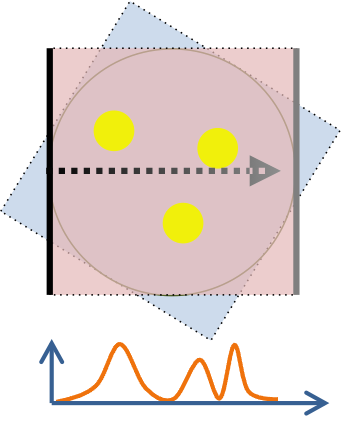}};
		\node at (10.4,1.65) {\Large$\cdots$};
		\node[align =left] at (10.4,0.5) {\scriptsize  {line} \\[-0.14cm] \scriptsize  {scans}};  
		\draw[line width=0.8pt,->,double] (10.9,1.5) -- (11.5,1.5);
		\fill[red!15!white,draw=black]  (11.6,0) rectangle (18.2,3);
		\node at (12.7,2.4) {\includegraphics[width=2cm]{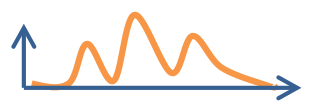}};
		\node at (12.7,1.7) {\includegraphics[width=2cm]{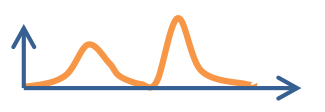}}; 
		\node at (12.7,0.6) {\includegraphics[width=2cm]{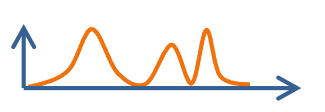}};
		\node at (12.7,1.25) {{$\vdots$}};
		\draw[line width=0.8pt,->,double] (13.7,1.5) -- (14.4,1.5);
		\node[align=center] at (14.05,2.2) {\scriptsize  {input} \\[-0.14cm] \scriptsize {angles,}\\[-0.14cm] \scriptsize {lines,}\\[-0.14cm] \scriptsize {basis} }; 
		\node at (15.0,1.5) {\includegraphics[width=1cm, height=2cm]{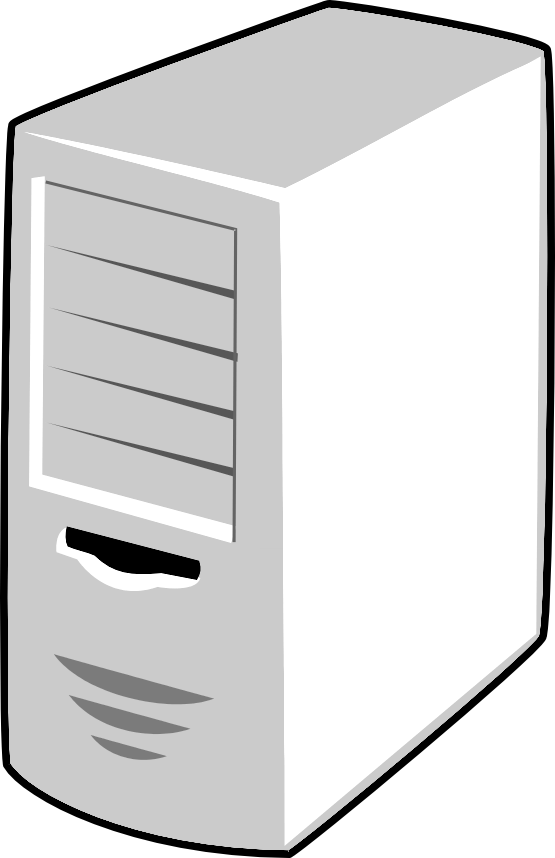}};
		\draw[line width=0.8pt,->,double] (15.65,1.5) -- (16.35,1.5);
		\node[align=center] at (16,2.2){\scriptsize  {sparse}  \\[-0.14cm] \scriptsize {recon-} \\[-0.14cm] \scriptsize {struction}}; 
		\node at (17.25,1.6) {\includegraphics[width=1.53cm]{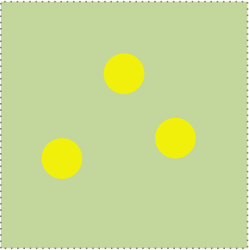}};
		\node[align=center] at (17.25, 0.5){\scriptsize {sample} \\[-0.14cm] \scriptsize {image}};
		\node at (5.6,3.24)  {{Microscopic Line Scans}};
		\node at (14.9,3.24) {{
		Computational Image Reconstruction}};
		}
	\end{tikzpicture} 
	\caption{ Scanning procedure of SECM with continuous line electrode probe. The user begins with mounting the sample on a rotational stage of microscope and chooses $m$ scanning angles. The microscope then carries on sweeping the line probe across the sample,  and measures the accumulated current generated between the interreaction of probe and the sample at equispaced  intervals of moving distance. After a sweep ends, the sample is rotated to another scanning angle and the scanning sweep procedure repeats, until all $m$ line scans are finished. Collecting all scan lines, and providing the information of the scanning angles, the microscope system parameters (such as the point spread function) and the sparse representing basis of image, the final sample image is produced via computation with sparse reconstruction algorithm.} \label{fig:line_scan_procedure} 
\end{figure*} 

 Realizing the promise of line probes (both in SECM and in microscopy in general) demands a more careful study of the mathematical and algorithmic problems of image reconstruction from line scans. Because these measurements are structured, they deviate significantly from conventional CS theory, and basic questions such as the number of line scans required for accurate reconstruction are currently unanswered. Moreover, practical reconstruction from line scans requires modifications to accommodate nonidealities in the sensing system. In this paper, we will address both of these questions through rudimentary analysis and experiments, showing that if the local features are either small or separated, then stable image reconstruction from line scans is attainable.

In the following, we will first describe the scanning procedure and introduce a mathematical model for line scans  in \Cref{sec:line-scans}.  \Cref{sec:line_scans_properties} discusses several important properties of this measurement model, including a rudimentary study of compressed sensing reconstruction with line scans of a spatially sparse image. In \Cref{sec:algorithm}, we give a practical algorithm for reconstruction from line scans, which accommodates measurement nonidealities. Finally, \Cref{sec:real-data} demonstrates our algorithm and theory by efficiently reconstructing both simulated and real SECM examples.

\subsection{Contribution}  

\begin{itemize}
	\item We expose the lowpass property of line scans, and with rudimentary analysis showing that the exact reconstruction of a sparse image is possible with only three line scans provided these features are well-separated.  
	\item We describe the challenges associated with image reconstruction from practical line scans, due to the high coherence of measurement model and inaccurate estimate of point-spread-function. Our reconstruction algorithm addresses these issues.
	\item Based on this theory and algorithmic ideas, we demonstrate a complete algorithm for image reconstruction of SECM with line scans, which includes an efficient algorithm for computation of image reconstruction, yielding improved results compared to \cite{o2018scanning}.
\end{itemize}

\subsection{Related work}   

%
%

\subsubsection{Compressed sensing tomography}

Line measurements also arise in \emph{computational tomography} (CT) imaging, a classical imaging modality (see e.g., \cite{hounsfield1973computerized, kak1979computerized, herman2009fundamentals}), with great variety of applications ranging from medical imaging to material science \cite{wellington1987x,frank1992electron,duric2005development}. Classical CT reconstruction recovers an image from densely sampled line scans, by approximately solving an inverse problem \cite{nuyts1998iterative, shepp1974fourier}.  These methods do not incorporate the prior knowledge of the structure of the target image, and degrade sharply when only a few CT scans are available. Compressed sensing offers an attractive means of reducing the number of measurements needed for accurate CT image reconstruction, and has been employed in applications ranging from medical imaging to (cryogenic) electron transmission microscopy \cite{chen2008prior,malczewski2013pet, goris2012atomic,  saghi2011three, leary2013compressed, donati2017compressed, binev2012compressed, nicoletti2013three}. The dominant approach assumes that the target image is sparse in a Fourier or wavelet basis, and reconstructs it via $\ell^1$ minimization or related techniques. Images in SECM and related modalities typically exhibit much stronger structure: they often consist some number of small particles \cite{davis2010evidence,batista2015nonadditivity}, or other repeated motifs \cite{cheung2018dictionary}. In this situation, CS is especially promising. On the other hand, as we will see below, understanding the interaction between line scans and spatially localized features demands that we move beyond conventional CS theory.

\subsubsection{Mathematical theory of line scans: Radon transform and image super-resolution} 

The question of recoverability from line measurements is related to the theory of the {\em Radon transform}, which corresponds to a limiting situation in which line scans at every angle are available \cite{radon20051,cormack1963representation, natterer2001tomography}. The Radon transform is invertible, meaning perfect reconstruction is possible (albeit not stable) in this limiting situation. Due to the \emph{projection slice theorem} \cite{helgason2010integral}, the line projections are inherently lowpass, and so the line scan reconstruction problem is related to superresolution imaging  \cite{farsiu2004advances}. When the image of interest consists of sparse point sources, the image can be stably recovered from its low-frequency components, provided the point sources are sufficiently separated  \cite{candes2014towards}. Similarly, we can hope to achieve stable recovery of localized features from line scans as long as the features are sufficiently separated.

\section{Line scan measurement model}\label{sec:line-scans}

To implement line scans for {SECM}, a line probe (\Cref{fig:microscope}) is mounted on an automated arm which positions the probe onto the sample surface. The line scan signal is generated by placing this line probe in different places, and measuring the integrated current induced by the interaction between the line probe and the electroactive part of the sample.  In a pragmatic scanning procedure (\Cref{fig:line_scan_procedure}), the user will choose distinct scanning angles $\theta_1, \dots, \theta_m$. For each angle $\theta$, the line probe is oriented in direction $\mb u_{\theta} = (\cos\theta,\sin\theta)$ and swept along the normal direction $\mb u_\theta^\perp = (\sin\theta,-\cos\theta)$. Each sweep of probe generates the projection of the target image along the probe direction $\mb u_{\theta}$; collecting these projections for each $\theta_i$, we obtain our complete set of measurements.

\subsection{Line projection}
To describe the scanning procedure more precisely, we begin with a mathematical idealization, in which the probe measures a line integral of the image. In this model, when the probe body is oriented in direction $\mb u_\theta$ at position $t$, we observe the integral of the image over $ \ell_{\theta,t}:= \{\mb w\in\R^2 \,\big|\, \langle\mb u_\theta^\perp,\,\mb w\rangle = t\} $:
\begin{align}\label{eqn:line_project}
	\mc L_\theta[\mb Y](t) &\,:= \,\textstyle\int_{\ell_{\theta,t}}\mb Y(\mb w)\, d{\mb w} \notag \\
	&\,=\,\textstyle\int_{s} \mb Y\paren{s\cdot\mb u_{\theta} \,+\, t\cdot \mb u_{\theta}^\perp}\, ds.
\end{align} 
Collecting these measurements for all $t$, we obtain a function $\mc L_{\theta}[\mb Y]$ which is the projection of the image along the direction $\mb u_\theta$. We refer to the operation $\mc L_\theta : L^2(\R^2) \to L^2 (\R)$ as a {\em line projection}. Combining projections in $m$ directions $\Theta = \set{\theta_i}_{i=1}^m$, we obtain an operator $\mc L_\Theta : L^2(\R^2)\to L^2(\R \times [m])$:
\begin{align}\label{eqn:line_project}
	\mc L_\Theta[\mb Y] &\,:=\,\tfrac{1}{\sqrt m}\brac{\,\mc L_{\theta_1}[\mb Y],\ldots,\mc L_{\theta_m}[\mb Y]\,}.
\end{align}

\subsection{Line scans}
In reality, it is not possible to fabricate an infinitely sharp line probe, and hence our measurements do not correspond to ideal line projections. The line probe has a response in its normal direction, causing a blurring effect that can be modeled as convolution with point spread function $\mb \psi$ along the sweeping direction. In SECM, $\mb \psi$ is typically skewed with a long tail in the sweeping direction. Accounting for this effect is important for obtaining accurate reconstructions in practice. In this more realistic model, our measurements $\wt{\mb R} \in L^2(\R\times[m])$ become
\begin{align}\label{eqn:line_scans}
	\wt{\mb R} &\,=\, \tfrac{1}{\sqrt m}[\mb \psi*\mc L_{\theta_1}\brac{\mb Y},\ldots,\mb \psi*\mc L_{\theta_m}\brac{\mb Y}] \notag \\
	&\,=:\, \mb \psi * \mc L_\Theta\brac{\mb Y}.    
\end{align}  
This measurement consists of $m$ functions $\mb \psi * \mc L_{\theta_i} \brac{\mb Y}(t)$ of a single (real) variable $t$, which corresponds to the translation of the probe in the $\mb u_{\theta_i}^\perp$ direction. In practice, we do not measure this function at every $t$, but rather collect $n$ equispaced samples. Write the sampling operator as $\mc S:L^2[\R] \to \R^n$, then our discretized  measurements $\mb R_i$ with scanning angle $\theta_i$ is defined as $\mb R_i = \mc S\{\wt{\mb R}_i\}$. Collect all $m$ discrete line scans, the final measurement $\mb R\in\R^{n\times m}$ is written as
\begin{align}
	\mb R &\,=\, [\mc S\{\wt{\mb R}_1\},\ldots, \mc S\{\wt{\mb R}_m\}] \,=:\, \mc S\{\wt{\mb R}\}.
\end{align}
Our task is to understand when and how we can reconstruct the target image $\mb Y$ from these samples.

\begin{figure}[t!]  
\centering	 
\begin{tikzpicture} 
	\fill[blue!13!white, draw=black!30!white] (-1.3in,-0.99in) rectangle (-1.9in+\columnwidth, 1.14in);
	\node at (0in,0in){\includegraphics[width = 0.28\textwidth]{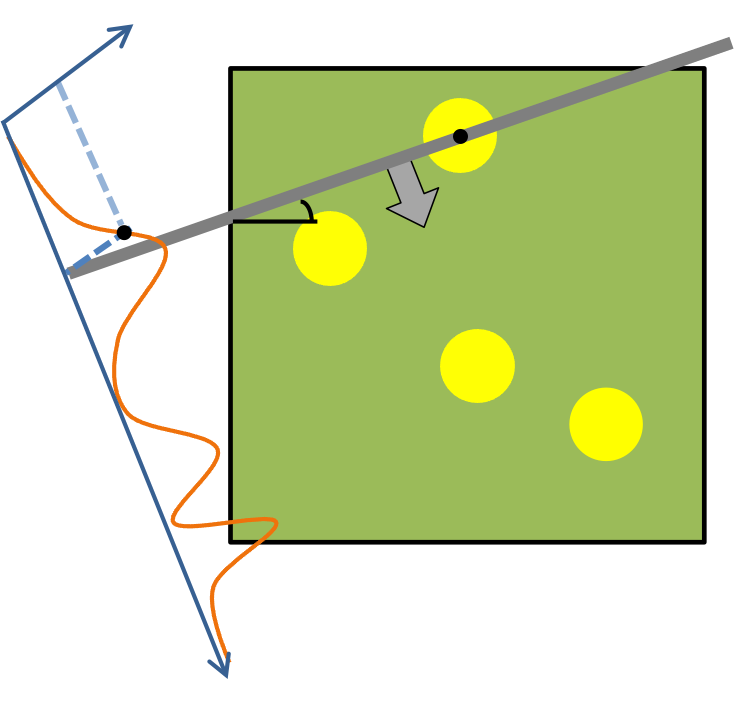} };
	\node at (-0.5in,0.95in) {$\mb R_\theta(t)$}; 
	\node at (-0.56in,-0.72in) {$t$};
	\node at (-0.17in,0.27in) {$\theta$};
	\node at (0.41in,0.26in) {$(\sin\theta,-\cos\theta)$};
	\node at (0.88in,0.91in) {$\ell_{\theta,t}$};
	\node at (0.33in, 0.48in) {$\mb w_i$}; 
	\node at (0.3in,-0.63in) {$\mb Y$}; 
\end{tikzpicture}
\caption{Mathematical expression of a single measurement from the line probe. When the stage rotate by $\theta$ clockwise, the relative rotation of probe to sample is counterclockwise by $\theta$. The grey line in the figure represents the rotated line probe, orienting in direction $\mb u_\theta = (\cos\theta,\sin\theta)$, and is sweeping in direction $\mb u_\theta^\perp = (\sin\theta,-\cos\theta)$. When it comes across the point $\mb w_i$ where $t=\langle\mb u_\theta^\perp,\,\mb w_i\rangle$, it integrates over the contact region $\ell_{\theta,t }$ between the probe and substate and produces a measurement $\mb  R_\theta(t)$.}\label{fig:single_line_project}
\end{figure}


\section{Promises and problems of line scans}\label{sec:line_scans_properties}

The line measurements $\mc L_\theta$  enjoy two major advantages as an imaging model: (i) compared to pointwise measurements, the line projections are more delocalized, hence  can be more efficient while measuring a spatially sparse signal; and (ii) it is easier to build a sharp edge for the line probe (even sharper then the tip diameter of a point probe), which is well-suited to detect ultra-high frequency components in the probe sweeping direction. This makes possible fast and high resolution imaging for scanning microscopes. 
 
Nevertheless, the line projection comes with a few apparent disadvantages. Consider a limiting scenario, in which infinitely many line projections are available, corresponding to every angle in $[0,2\pi)$. The {\em projection slice theorem} implies that these measurements are invertible, and the image can be perfectly reconstructed. However, this reconstruction is {\em not stable}: viewed in Fourier domain, these measurements are approximately lowpass, and inverting them amplifies high frequencies. This that means even though a single line projection can be highly sensitive to directional high frequency components, the cumulative line projections are not. Stably inverting them requires prior knowledge of the image to be reconstructed. Moreover, to reconstruct an image consisting of multiple localized features, these features need to be either sufficiently separated.

The other disadvantage of line projections can be viewed from the CS perspective, that the line scans measurements model are not coherent---\emph{even} if the local features are well separated. This means that in practice, when using only a few line scans for reconstruction, the number of lines required for exact reconstruction cannot be obtained from conventional CS theory. More importantly, the coherence of line projections can cause issues in image reconstruction; conventional methods tend to produce reconstructions with incorrect magnitudes. 
 
Finally, we discuss measurement nonidealities due to variability in the PSF $\mb \psi$. In the next section, we will provide an algorithmic solution addressing both issues from the coherence of line projections and incomplete information of PSF.

\subsection{CS of line projections for highly localized image} 

Compressed sensing, in its simplest form,  asserts that if the target signal has sparse representation, it can be exactly reconstructed from a few measurements, provided those measurements are incoherent to the basis of sparsity. Since in microscopic imaging the underlying signal is often structured and spatially localized, CS theory suggests that delocalized measurements, such as line projections, could yield more efficient reconstructions than conventional point measurements. 

Inspired by CS, we start from providing the sufficient conditions of sparse image reconstruction from line measurements via total variation minimization \cite{krahmer2017total}. Later, base on these conditions, we demonstrate the the use of line probes can be indeed more efficient than using point probes.  

 \begin{proposition} \label{prop:dualcert_condition}\emph{[Certificate of TV-norm minimization]} Let $\mb X_0 = \sum_{\mb w\in \mc W}\alpha_{\mb w}\mb\delta_{\mb w} $\footnote{The Dirac measure $\mb\delta$ satisfies   $\int \mb D(\mb w)\mb \delta_{\mb w_i}(d\mb w) = \mb D(\mb w_i)$ for continuous and compactly supported $\mb D$ and has total variation $\int \abs{\mb\delta_{\mb w_i}}(d\mb w)  = 1$, so $\mb D*\mb \delta_{\mb w}$ represents $\mb D$ with center at $\mb w$ \cite{rudin2006real}. As a functional, we write $\innerprod{\mb \delta_{\mb w_i}}{\cdot} : L^2(\R^2) \to\R$ where $\innerprod{\mb\delta_{\mb w_i}}{\mb D} = \mb D(\mb w_i)$.} with $\abs{\mc W}<\infty$. Given continuous compactly supported circular symmetric $\mb D\in L^2(\R^2)$, scanning angles $\Theta = \set{\theta_1,\ldots,\theta_m}$ and measurement $\wt{\mb R} = \mc L_\Theta[\mb D*\mb X_0]$.   Suppose there exists $\wt{\mb Q}$ as finite sum of weighed Diracs  such that 
	\begin{align}\label{eqn:dual_condition}
	\begin{cases}
		\mb D* \mc L_{\Theta}^*\big[\wt{\mb Q}\big](\mb w) = \sign\paren{\alpha_{\mb w}}, & \quad \mb w\in\mc W\\
	\abs{\mb D* \mc L_{\Theta}^*\big[\wt{\mb Q}\big](\mb w)} < 1, &\quad \mb w\not\in\mc W.
	\end{cases}
	\end{align} 
	If the Gram matrix $\mb G\in\R^{\abs{\mc W}\times \abs{\mc W}}$, defined as 
	\begin{align}\label{eqn:gram_G}
		\mb G_{ij} = \innerprod{\mc L_\Theta[\mb D*\mb\delta_{\mb w_i}]}{\,\mc L_\Theta[\mb D*\mb\delta_{\mb w_j}]},\; \mb w_i,\mb w_j\in\mc W
	\end{align}
	is positive definite, then $\mb X_0$ is the unique optimal solution to 
	\begin{align}\label{eqn:tv-min}
		\textstyle\min_{\mb X\in \mr{BV}(\R^2)} \textstyle\int_{\mb w}\abs{\mb X}(d\mb w) \quad s.t.\quad \wt{\mb R} = \mc L_\Theta[\mb D*\mb X].
	\end{align}
\end{proposition}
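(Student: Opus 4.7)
The plan is to run a standard dual-certificate argument for TV minimization in the spirit of the super-resolution literature (e.g., \cite{candes2014towards}), adapted to the composite measurement $\mc L_\Theta \circ (\mb D * \cdot)$. Take any feasible $\mb X$ for \eqref{eqn:tv-min}, set $\mb H = \mb X - \mb X_0$, so feasibility reads $\mc L_\Theta[\mb D * \mb H] = 0$. The goal is to show $\norm{\mb X}{\mr{TV}} \ge \norm{\mb X_0}{\mr{TV}}$, with equality only when $\mb H = 0$.

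Define $\mb q := \mb D * \mc L_\Theta^*[\wt{\mb Q}]$. Since $\mb D$ is continuous and compactly supported and $\wt{\mb Q}$ is a finite sum of weighted Diracs, $\mb q$ is a continuous bounded function on $\R^2$; the hypothesis \eqref{eqn:dual_condition} then gives $\mb q(\mb w) = \sign(\alpha_{\mb w})$ on $\mc W$ together with $|\mb q| < 1$ elsewhere, so $\norm{\mb q}{\infty} \le 1$. I now pair $\mb q$ with $\mb X$ in two ways. The duality between continuous bounded functions and finite signed measures yields $\langle \mb q,\mb X\rangle \le \norm{\mb X}{\mr{TV}}$. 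On the other hand, two adjoint identities --- self-adjointness of convolution with the symmetric kernel $\mb D$, and the formal adjoint of $\mc L_\Theta$ --- give
\begin{align*}
	\langle \mb q, \mb H\rangle = \langle \mc L_\Theta^*[\wt{\mb Q}],\, \mb D * \mb H\rangle = \langle \wt{\mb Q},\, \mc L_\Theta[\mb D * \mb H]\rangle = 0,
\end{align*}
so $\langle \mb q, \mb X\rangle = \langle \mb q,\mb X_0\rangle = \sum_{\mb w\in\mc W}\sign(\alpha_{\mb w})\alpha_{\mb w} = \norm{\mb X_0}{\mr{TV}}$. Chaining the two yields optimality: $\norm{\mb X_0}{\mr{TV}} \le \norm{\mb X}{\mr{TV}}$.

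For uniqueness, decompose $\mb H = \mb H_{\mc W} + \mb H_{\mc W^c}$ into its (necessarily atomic) restriction $\mb H_{\mc W} = \sum_{\mb w\in\mc W}\beta_{\mb w}\mb\delta_{\mb w}$ on the finite set $\mc W$ and its restriction $\mb H_{\mc W^c}$ on the complement. On disjoint supports the TV norm splits, so $\norm{\mb X}{\mr{TV}} = \sum_{\mb w\in\mc W}|\alpha_{\mb w}+\beta_{\mb w}| + \norm{\mb H_{\mc W^c}}{\mr{TV}}$, and the analogous decomposition of $\langle \mb q,\mb X\rangle$ replaces each $|\alpha_{\mb w}+\beta_{\mb w}|$ by $\sign(\alpha_{\mb w})(\alpha_{\mb w}+\beta_{\mb w})$ and $\norm{\mb H_{\mc W^c}}{\mr{TV}}$ by $\int \mb q\, d\mb H_{\mc W^c}$. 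Assuming $\norm{\mb X}{\mr{TV}} = \norm{\mb X_0}{\mr{TV}} = \langle \mb q,\mb X\rangle$, every term must saturate its inequality; the strict bound $|\mb q| < 1$ off $\mc W$ makes $\int \mb q\, d\mb H_{\mc W^c} = \norm{\mb H_{\mc W^c}}{\mr{TV}}$ impossible unless $\mb H_{\mc W^c} = 0$. What remains is feasibility of the atomic residue, $\mc L_\Theta[\mb D * \mb H_{\mc W}] = 0$, which squared reads $\beta^\top \mb G\beta = 0$; positive-definiteness of $\mb G$ then forces $\beta = 0$ and hence $\mb H = 0$. I expect the main obstacle to be exactly this uniqueness step: the two pieces of $\mb H$ must be controlled by two distinct mechanisms, with the Gram-matrix hypothesis providing the injectivity on $\mc W$ that the dual certificate --- saturated at $\pm 1$ there --- cannot.
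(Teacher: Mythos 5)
Your proposal is correct and rests on the same dual-certificate mechanism as the paper's proof: pair the certificate function $\mb q = \mb D*\mc L_\Theta^*[\wt{\mb Q}]$ against a feasible competitor, use $\|\mb q\|_\infty\le 1$ with equality pattern $\sign(\alpha_{\mb w})$ on $\mc W$ for optimality, and the strict bound off $\mc W$ plus positive definiteness of $\mb G$ for uniqueness. Where you genuinely diverge is the uniqueness step, and your version is the stronger one. The paper takes an arbitrary second optimizer $\mb X'$ but immediately writes it as a finite sum of Diracs $\sum_{\mb w'\in\mc W'}\alpha'_{\mb w'}\mb\delta_{\mb w'}$, i.e.\ it silently restricts the competition to atomic measures; it then argues $\mc W'\subseteq\mc W$ and invokes injectivity from $\mb G\succ 0$. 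Your decomposition $\mb H=\mb H_{\mc W}+\mb H_{\mc W^c}$ of the difference measure makes no atomicity assumption: the part off $\mc W$ is killed by term-by-term saturation against the strict inequality $\abs{\mb q}<1$ (note this needs $\int(1-\abs{\mb q})\,d\abs{\mb H_{\mc W^c}}>0$ whenever $\mb H_{\mc W^c}\neq 0$, which holds since $1-\abs{\mb q}>0$ pointwise on $\mc W^c$ --- worth stating explicitly), and the part on $\mc W$, being automatically atomic because $\mc W$ is finite, is killed by $\beta^\top\mb G\beta=0$. This buys you a proof valid over all of $\mr{BV}(\R^2)$ as the proposition actually claims, at the cost of a slightly longer argument; the one hypothesis you should flag as load-bearing is the circular symmetry of $\mb D$, which you correctly use to move the convolution across the inner product in $\innerprod{\mb q}{\mb H}=\langle\mc L_\Theta^*[\wt{\mb Q}],\mb D*\mb H\rangle$.
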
 
 
 \begin{proof}
	First we show the existence result. Note that $\mb X_0$ satisfies the equalitiy constraint \eqref{eqn:tv-min}  automatically, and since total variation of Dirac measure is exactly one,
	\begin{align}
		\textstyle\int{\abs{\mb X_0}(d\mb w')} &\;=\; \textstyle\sum_{\mb w}\textstyle \int\abs{\alpha_{\mb w}} \mb \delta_{\mb w}(d\mb w') \;=\; \textstyle \sum_{\mb w} \abs{\alpha_{\mb w}} \notag \\
		&\;=\;   \textstyle \sum_{\mb w}\alpha_{\mb w}\cdot \mb D*\mc L_{\Theta}^*\big[\wt{\mb Q}\big](\mb w)\notag 
		\end{align}
then since $\mb D$ is circular symmetric, $\mb D*\mc L^*_\Theta\big[\wt{\mb Q}\big](\mb w) = \langle\mb \delta_{\mb w},\mb D*\mc L^*_\Theta\big[\wt{\mb Q}\big]\rangle = \langle\mc L_\Theta\brac{\mb D*\mb \delta_{\mb w}},\wt{\mb Q}\rangle$, we derive 
\begin{align}
		\textstyle\int{\abs{\mb X_0}(d\mb w')} &\;=\; \langle\mc L_\Theta\brac{\mb D*\textstyle\sum_{\mb w}\alpha_{\mb w}\mb \delta_{\mb w}},\wt{\mb Q}\rangle \;=\; \langle\wt{\mb R},\wt{\mb Q}\rangle \label{eqn:dualgap0} \notag 
	\end{align}
which certifies that $\mb X_0$ is an optimal solution to the problem since the duality gap $\int\abs{\mb X_0}(d\mb w')-\langle\wt{\mb R},\wt{\mb Q}\rangle = 0$. For uniqueness, let $\mb X'= \sum_{\mb w'\in \mc W'}\alpha'_{\mb w'}\mb\delta_{\mb w'}$ to be another optimal solution with $\mc W'\not\subseteq\mc W$, since we know $\mb X'$ is primal feasible $\wt{\mb R} = \mc L_\Theta\brac{\mb D*\mb X'}$, then 
\begin{align}
	\textstyle \int\abs{\mb X_0}(d\mb w') &\;=\; \langle\wt{\mb R},\wt{\mb Q}\rangle \;=\;  \langle\mc L_\Theta\brac{\mb D*\mb X'},\wt{\mb Q}\rangle \notag \\
	&\;=\; \langle\mb X',\mb D*\mc L_\Theta^*[\wt{\mb Q}]\rangle \notag \\
	&\;=\; \textstyle\sum_{\mb w'\in\mc W'}\alpha'_{\mb w'}\mb D*\mc L^*_\Theta\big[\wt{\mb Q}\big](\mb w') \notag
	\end{align}
and by knowing $\mc W'\not\subseteq\mc W$ and using the second condition in \eqref{eqn:dual_condition}:  
\begin{align} 
	\textstyle\int\abs{\mb X_0}(d\mb w') &\;<\;  \textstyle\sum_{\mb w'\in\mc W'}\abs{\alpha_{\mb w'}'} \;=\; \textstyle\int\abs{\mb X'}(d\mb w') \notag 
\end{align}
thus $\mb X'$ is an optimal solution only if  $\mc W'\subseteq \mc W$. Finally   uniqueness of $\mb X_0$ follows from  injectivity of $\mc L_{\Theta}[\mb D*\,\cdot\;]$ over $\mc W$ from \eqref{eqn:gram_G}. 
\end{proof}

Specifically, when the target image is highly spatially sparse and its components are well separated, the line projections can be a very efficient measurement model. A concrete example is demonstrated in \Cref{thm:small_separated_disc}, where we assume the sparse component of the image signal are small and separated discs; if the radius of the discs are sufficiently small, then,  perhaps surprisingly, only three line projections is required to exactly reconstruct the image via efficient algorithm.

\begin{lemma}\emph{[Reconstruction from three line projection]}  \label{thm:small_separated_disc} Consider an image consists of $ k\geq 2$ discs radius $r$. If the centers $\mb w_1,\ldots\mb w_k$ are at least separated by $\tfrac{2}{C}k^2r$, then three continuous line projections with probe direction chosen independent uniformly at random suffice to recover the image with probability at least $1-C$ via solving \eqref{eqn:tv-min}. 
\end{lemma}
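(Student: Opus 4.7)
The plan is to verify the dual-certificate conditions of Proposition~\ref{prop:dualcert_condition} with $\mb D$ a (circularly symmetric) disc-shaped kernel supported in $B(0, r)$ and $\mb X_0 = \sum_{j=1}^k \alpha_j \mb\delta_{\mb w_j}$, so that the image $\mb D * \mb X_0$ is exactly the prescribed union of $k$ discs. Writing $t_{\theta,j} := \innerprod{\mb u_\theta^\perp}{\mb w_j}$, two geometric observations drive the construction: the line-projection bump $\mc L_\theta[\mb D * \mb\delta_{\mb w_j}]$ is supported on the interval $[t_{\theta,j}-r,\, t_{\theta,j}+r]$, and the back-projected kernel $\mb D*\mc L_\theta^*[\mb\delta_{(t_0,\theta)}](\mb w)$ is a strip-shaped bump of width $2r$ perpendicular to $\mb u_\theta^\perp$, attaining its maximum on the center line $\innerprod{\mb u_\theta^\perp}{\mb w}=t_0$ and strictly decreasing away from it.

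The probabilistic core of the argument is to show that with probability at least $1-C$ over $\theta_1,\theta_2,\theta_3 \simiid \mr{Unif}[0,2\pi)$, the $k$ projected centers $\set{t_{\theta_i, j}}_{j=1}^k$ at each of the three angles are pairwise separated by more than $2r$---i.e.\ the line-integral bumps are pairwise disjointly supported at every angle. For any fixed pair $(j,j')$ one has $t_{\theta,j} - t_{\theta,j'} = \norm{\mb w_j - \mb w_{j'}}{} \cos(\alpha_{jj'}-\theta)$ for a deterministic $\alpha_{jj'}$, so $\prob{\abs{t_{\theta,j}-t_{\theta,j'}} < 2r} \le 2r/\norm{\mb w_j - \mb w_{j'}}{}$. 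A union bound over the $3\binom{k}{2}$ (angle, pair) combinations, combined with the separation hypothesis $\norm{\mb w_j - \mb w_{j'}}{} \ge \tfrac{2}{C}k^2 r$, bounds the bad event by a constant multiple of $C$ (the constant being absorbable into the $2/C$ in the hypothesis).

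Conditioned on this disjointness event, I would take
\begin{align*}
\wt{\mb Q} \;:=\; \sum_{i=1}^3 \sum_{j=1}^k q_j\,\mb\delta_{(t_{\theta_i, j},\, \theta_i)}
\end{align*}
with a single scalar weight $q_j$ per disc, chosen so that the three contributions from the angles $\theta_i$ sum to $\sign(\alpha_j)$ at $\mb w_j$. Disjointness then immediately gives: at each $\mb w_l$ only the $j=l$ terms are nonzero and sum to exactly $\sign(\alpha_l)$, while at any $\mb w \not\in \mc W$ at most one term per angle can be nonzero and each is bounded in absolute value by its center-line maximum, yielding $\abs{\mb D * \mc L_\Theta^*[\wt{\mb Q}](\mb w)} \le 1$. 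The main obstacle will be upgrading this to a \emph{strict} inequality off $\mc W$: equality forces $\mb w$ to lie simultaneously on three ``center lines'' $\innerprod{\mb u_{\theta_i}^\perp}{\mb w - \mb w_{j_i}}=0$ with signs $\sign(\alpha_{j_i})$ agreeing across $i$. Concurrence of any such triple of lines (ranging over the $k^3$ choices of $(j_1,j_2,j_3)$) is a codimension-one condition on the continuous random $\theta_i$, so almost surely no spurious coincidence occurs off $\mc W$ and the certificate is strict. Finally, disjoint projections also render the Gram matrix $\mb G$ diagonal with strictly positive entries (each equal to the energy of a single bump summed over the three angles), hence positive definite, and Proposition~\ref{prop:dualcert_condition} delivers exact recovery via TV-norm minimization.
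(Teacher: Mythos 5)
Your proposal is correct and follows essentially the same route as the paper's proof: a union bound showing that with probability $1-O(C)$ the projected disc centers are pairwise separated by more than $2r$ at all three angles, a dual certificate built from weighted Diracs at the projected centers with weights normalizing the three back-projected peaks to sum to one, strictness off the support via the almost-sure non-concurrence of three back-projection lines, and a diagonal (hence positive definite) Gram matrix. The only cosmetic difference is your failure-probability constant ($\tfrac{3}{2}C$ versus the paper's exactly $C$, obtained from the sharper bound $\sin^{-1}(2r/d)<\tfrac{2\pi r}{3d}$ under the separation hypothesis), which you correctly note is absorbable.
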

\begin{proof}
We first argue that with high probability, no pair of discs overlaps within any line scan. Let $\theta_i\simiid\mr{Unif}[-\pi,\pi)$ denote the $i$-th scanning angle. Write $d$ as the minimum distance between all pairs of $(\mb w_i,\mb w_j)$,  the probability that any particular pair of two discs overlap is bounded as
\begin{align}
	&\prob{\text{Two discs overlap on line scan}   \,\wt{\mb R}_i} \notag \\
	&\quad\leq\, \prob{\theta_i\in \brac{-\sin^{-1}\paren{\tfrac{2r}{d}}, \sin^{-1}\paren{\tfrac{2r}d}}}\notag \\
	&\quad =\, \tfrac2\pi \sin^{-1}\tfrac{2 r}d
\end{align}                        
Using the assumption that $R<\frac{d}8$ to bound $\sin^{-1}(\frac{2r}d) < \frac{2\pi r}{3d}$ and summing the failure probability over all three line scans and $\frac{k(k-1)}2$ pairs of discs, we obtain:
\begin{align}\label{eqn:small_disc_no_overlap}
	&\prob{\text{Two of the $k$ discs overlap at either} \, \wt{\mb R}_1,\wt{\mb R}_2,\wt{\mb R}_3} \notag \\
	&\quad\leq\,  \tfrac{3k^2}2\cdot\prob{\text{Two discs overlap on line scan}   \,\wt{\mb R}_1} \notag \\                                             
&\quad\leq\, \tfrac{3k^2}{\pi}  \sin^{-1}\paren{\tfrac{2r}d}\,\leq\, \tfrac{2k^2 r}d \notag \\
&\quad\leq\, C
\end{align}                                                                          
Thus, with probability at least $1-C$, no pair of discs overlaps in any line scan.

Since there are no overlapping discs in any line, a single line projection $\wt{\mb R}_i(t)$  with scan angle $\theta_i$ has largest magnitude at points $t$ where the probe body passes the disc center  $\mb w_j$. These points of largest magnitude $\beta_j$ is located at $\langle\mb u^\perp_{\theta_i}, \mb w_j\rangle$ on $\wt{\mb R}_i$, or equivalently,
\begin{align}\label{eqn:dualcert_position}
	\mc L_{\theta_i}[\mb D*\mb\delta_{\mb w_j}](\langle\mb u_{\theta_i}^\perp,\mb w_j\rangle) = \beta_j,\qquad i=1,2,3
\end{align}
Using these points, we construct the dual certificate $\wt{\mb Q}_i$ for angle $\theta_i$, where 
\begin{align}
	\wt{\mb Q}_i = \textstyle\sum_{j=1}^k \tfrac{1}{\sqrt3\beta_j}\mb\delta_{\langle\mb u_{\theta_i}^\perp,\mb w_j\rangle}
\end{align}
and $\wt{\mb Q} = \big[\wt{\mb Q}_1,\wt{\mb Q}_2,\wt{\mb Q}_3\big]$. Using this certificate we verify \eqref{eqn:dual_condition} holds. For the equality, calculate at every $\mb w_j\in\set{\mb w_1,\ldots\mb w_k}$:
\begin{align}
	&\,\mb D*\mc L^*_{\set{\theta_1,\theta_2,\theta_3}}\big[\wt{\mb Q}\big](\mb w_j) \notag \\
	=&\, \langle\mb D*\mc L^*_{\set{\theta_1,\theta_2,\theta_3}}\big[\wt{\mb Q}\big],\mb \delta_{\mb w_j}\rangle\,=\,\tfrac1{\sqrt3} \textstyle\sum_{i=1}^3 \langle\wt{\mb Q}_i,\mc L_{\theta_i}[\mb D*\mb \delta_{\mb w_j}]\rangle  \notag \\ 
	=&\, \tfrac{1}{\sqrt3}\textstyle\sum_{i=1}^3\big\langle\tfrac{1}{\sqrt3\beta_j}\mb\delta_{\langle\mb u_{\theta_i}^\perp,\mb w_j\rangle},\mc L_{\theta_i}[\mb D*\mb \delta_{\mb w_j}]\big\rangle \notag \\
	=&\, \tfrac{1}{3\beta_j}\textstyle\sum_{i=1}^3\mc L_{\theta_i}[\mb D*\mb\delta_{\mb w_j}](\langle\mb u_{\theta_i}^\perp,\mb w_j\rangle) = 1 \label{eqn:3l-equality}
\end{align}
where the third line is by plugging in $\wt{\mb Q}$ and derived with no overlap property; the forth line via plugging in \eqref{eqn:dualcert_position}. For the inequality,  calculate 
\begin{align}
	&\,\abs{\mb D*\mc L_{\set{\theta_1,\theta_2,\theta_3}}^*\big[\wh{\mb Q}\big](\mb w)} \notag \\
	&\qquad\;\;=\,\abs{\textstyle\sum_{i=1}^3\textstyle\sum_{j=1}^k\tfrac{1}{3\beta_j}\mc L_{\theta_i}[\mb D*\mb\delta_{\mb w}](\langle\mb u_{\theta_i}^\perp,\mb w_j\rangle)} \label{eqn:3l_ineq_1}
\end{align}
which is derived similarly as \eqref{eqn:3l-equality}.  Now, by observing $\mc L_\theta[\mb D*\mb\delta_{\mb w}]$ has unique local maximum at $\langle\mb u_{\theta}^\perp,\mb w\rangle$, each summand (w.r.t. $i$) in \eqref{eqn:3l_ineq_1} is strictly less than 1 if $\mb w$ does \emph{not} satisfy  
\begin{align}
	\exists\,j\in\set{1,\ldots,k},\;\;\; \langle\mb u_{\theta_i}^\perp,\,\mb w\rangle = \langle\mb u_{\theta_i}^\perp,\,\mb w_j\rangle.
\end{align}
Accordingly, define the \emph{back projection} line set $\ell_{\theta_i,t_j}$ as
\begin{align}\label{eqn:line_set}
 \ell_{\theta_i, t_j }:= \{\mb w\in\R^2 \,\big|\, \langle\mb u_{\theta_i}^\perp,\,\mb w\rangle = \langle\mb u_{\theta_i}^\perp,\,\mb w_j\rangle\},
\end{align} 
we want to show that for every $\mb w\not\in\set{\mb w_1,\ldots,\mb w_k}$, 
\begin{align} \label{eqn:3l_ineq_2}
	\mb w\,\not\in\, \cap_{i=1}^3\big(\cup_{j=1}^k\ell_{\theta_i,t_j}\big)
\end{align} 
then \eqref{eqn:3l_ineq_1} is strictly less then 1.

W.l.o.g., write $\mb w_{j\ell} = \ell_{\theta_1,t_j}\cap \ell_{\theta_2,t_\ell}$.  Suppose the point $\mb w_{j\ell}$ is in the third line set $\cup_{j=1}^k \ell_{\theta_3,t_j}$, then there exists some disc center $\mb w_q$ such that $\langle\mb u_{\theta_3}^\perp,\,\mb w_{j\ell}\rangle = \langle\mb u_{\theta_3}^\perp,\,\mb w_q\rangle$. Since $\theta_3$ is generated uniform randomly, we conclude that for any $j,\ell$:
\begin{align}
	\prob{\,\exists\, q \in 1,\ldots,k\quad s.t.\quad\mb w_{j\ell} \in\ell_{\theta_3,t_q}\,}=0.  
\end{align}                  
The direction $\mb u_{\theta_3}$ is not aligned with the line formed by $\mb w_{j\ell},\mb w_q$ almost surely. This proves \eqref{eqn:3l_ineq_2}.

Finally, the diagonal entries of Gram matrix $\mb G$ defined in \eqref{eqn:gram_G} is strictly positive, and the off-diagonal entries $\mb G_{ij}$ can be derived as
\begin{align}
	\mb G_{ij}&\,=\,\textstyle\tfrac13\sum_{t=1}^3\innerprod{\mc L_{\theta_t}[\mb D]*\mb\delta_{\mb w_i} }{\mc L_{\theta_t}[\mb D]*\mb\delta_{\mb w_j}} \,=\,0
\end{align}
by no overlap property. Hence $\mb G$ is positive definite. This concludes that solving total variation minimization successfully reconstruct the image from three line projections.
\end{proof}

The proof idea can be depicted pictorially in \Cref{fig:proof_three_lines}, in which we show the construction of dual certificate $\wt{\mb Q}$, and the  \emph{back projection} operation $\mc L_\Theta^*$ on $\wt{\mb Q}$ which we used in the proof to certify the optimality. In fact, as we will show later, the operation $\mc L_\Theta^*$ is  the cornerstone for most of the reconstruction algorithms in computed tomography, as well as in our sparse reconstruction algorithm.

\begin{figure}[t!]
	\centering 
	\begin{tikzpicture}
		\fill[blue!13!white, draw=black!30!white] (-1.3in,-1.3in) rectangle (-1.9in+\columnwidth, 1.12in); 
		\node at (0.1in,0in) {\includegraphics[width= 0.31\textwidth]{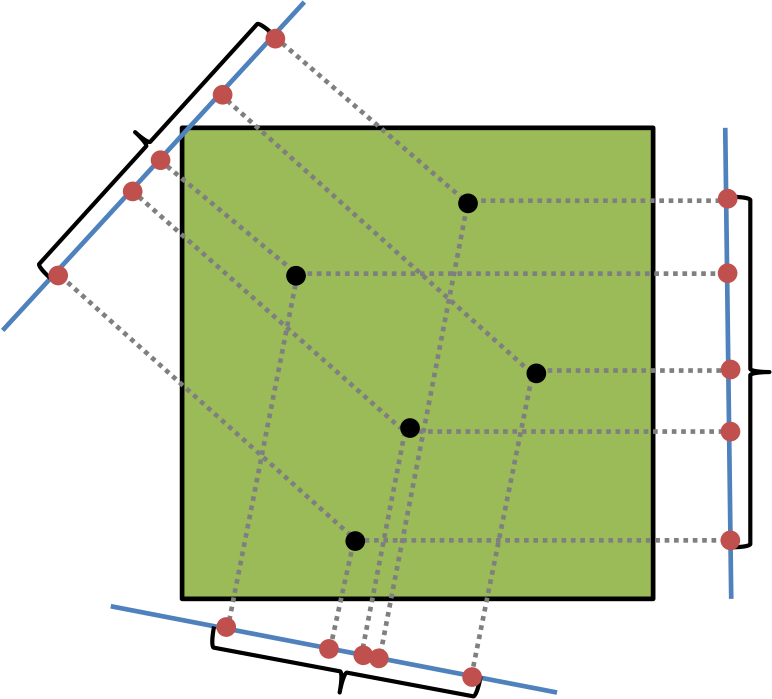}};
		\node[rotate=-12] at (0in,-1.1in) {$\cup_{j=1}^k \ell_{\theta_1,t_j}$};
		\node at (-0.6in,-0.9in) {$\wt{\mb R}_1$};
		\node[rotate=50] at (-0.73in,0.67in) {$\cup_{j=1}^k \ell_{\theta_2,t_j}$};
		\node at (-0.06in,0.89in) {$\wt{\mb R}_2$};
		\node[rotate=270] at (1.35in,-0.1in) {$\cup_{j=1}^k \ell_{\theta_3,t_j}$}; 
		\node at (1.22in,-0.8in) {$\wt{\mb R}_3$};
		\node at (0.53in,0.32in) {$\mb w_1$};
		\node at (-0.12in,-0.51in) {$\mb w_k$};
	\end{tikzpicture} 
	\caption{Proof sketch for sufficiency of image recovery from three line projections. Given a sample with separated tiny discs $\mb w_1,\ldots \mb w_k$ (black dots), randomly choosing three lines projection forms lines $\wt{\mb R}_1,\wt{\mb R}_2,\wt{\mb R}_3$, in which all the discs after line projection (red dots) are well-separated. From each of these lines, we construct the dual $\wt{\mb Q}$ as center of red dots, and a back projection image form the dual (dash lines), forming the set $\cup_{j=1}^k \ell_{\theta_i,t_j}.$ Intersection of three such line sets is exactly the set of ground truth disc centers.}\label{fig:proof_three_lines}
\end{figure}  

\subsection{Reconstructability from line projections of localized image in practice} \label{sec:line-projection-low-pass}
While the  microscopic images are often sparse in the spatial domain, they rarely satisfy the conditions of  \Cref{thm:small_separated_disc}, in which the local features are uncharacteristically small and far apart. In the following, we will show in practical application of line scans, when the image consists of multiple localized motifs, the performance of line measurements degrades once the ratio between the size of motifs and its separating distance increases.

\subsubsection{Coherence of line projection of two localized motif}\label{sec:high-coherence} We start from a simple case considering an image with two motifs located at different locations. Define a $2\times 2$ Gram matrix $\mb G$ with its $ij$-th entries being  \emph{coherence} \cite{donoho2006stable} between line projected signal of two motifs $\mb D$ with center at $\mb w_i$ and $\mb w_j$ respectively, 
\begin{align} \label{eqn:Gij}
	\mb G_{ij} = \innerprod{\mc L_\Theta[\mb D*\mb\delta_{\mb w_i}]}{\,\mc L_\Theta[\mb D*\mb\delta_{\mb w_j}]}.
\end{align}
If the off-diagonal entry $\mb G_{ij}$ is small in magnitude compared to the diagonal entries $\mb G_{ii},\mb G_{jj}$, then it suffices to reconstruct the image exactly with efficient algorithm. Conversely, if $\mb G$ is ill-conditioned or even rank-deficient, then exact recovery will be impossible.


\begin{lemma}\emph{[Coherence of line projection Gaussians]}\label{lem:coherence_gauss} Let $\mb D$ be the two-dimensional Gaussian functions with covariance $r\mb I^2$ and normalized in a sense that $\norm{\mc L_0[\mb D]}{L^2} = 1$. If $\theta$ is     uniformly random, then the expectation of inner product between two line projected $\mb D$ at different locations $\mb w_i,\mb w_j$ is bounded by  
\begin{align}
	&\paren{1-\tfrac{d^2}{8r^2}}\1_{\set{d\leq 2r}} + \tfrac{r}{2d}\1_{\set{d>2r}}\notag \\
	&\; \leq \E_\theta \big\langle\mc L_\theta[\mb D*\mb\delta_{\mb w_i}],\,\mc L_\theta[\mb D*\mb\delta_{\mb w_j}]\big\rangle \leq \tfrac{1}{\sqrt{1+d^2/4r^2}}. 
\end{align} 
where $d = \norm{\mb w_i-\mb w_j}2$. 
\end{lemma}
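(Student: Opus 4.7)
The strategy is to reduce the expectation to a modified-Bessel-function integral and then bound it separately above and below. Since $\mb D$ is radially symmetric, $\mc L_0[\mb D]$ is a one-dimensional Gaussian of variance $r^2$, and $\mc L_\theta[\mb D*\mb\delta_{\mb w}](t)$ is the same Gaussian shifted to center $\langle\mb u_\theta^\perp,\mb w\rangle$. The normalization $\norm{\mc L_0[\mb D]}{L^2}=1$ pins down the leading constant, and completing the square in the product of two one-dimensional Gaussians yields the closed form
\[
\innerprod{\mc L_\theta[\mb D*\mb\delta_{\mb w_i}]}{\mc L_\theta[\mb D*\mb\delta_{\mb w_j}]} \;=\; \exp\!\paren{-\tfrac{(s_i-s_j)^2}{4r^2}},
\]
where $s_i := \langle\mb u_\theta^\perp,\mb w_i\rangle$. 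Parametrizing $\mb w_i - \mb w_j = d\,\mb u_\phi$ gives $s_i - s_j = d\sin(\theta-\phi)$, so with $\alpha := \theta-\phi$ uniform on $[0,2\pi)$ the quantity of interest becomes $\E_\theta[\,\cdot\,] = \tfrac{1}{2\pi}\int_0^{2\pi}\exp(-a\sin^2\alpha)\,d\alpha$ with $a := d^2/(4r^2)$. Using $\sin^2\alpha = (1-\cos 2\alpha)/2$ together with the standard integral representation $I_0(u) = \tfrac{1}{2\pi}\int_0^{2\pi}e^{u\cos\beta}\,d\beta$, this collapses to $e^{-a/2}I_0(a/2) = e^{-d^2/(8r^2)}I_0\!\paren{d^2/(8r^2)}$.

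For the upper bound I invoke the classical inequality $e^{-x}I_0(x)\leq (1+2x)^{-1/2}$; substituting $x = d^2/(8r^2)$ produces exactly the claimed $(1+d^2/(4r^2))^{-1/2}$. For the lower bound I split by cases. When $d \leq 2r$, Jensen's inequality applied to the convex function $u\mapsto e^{-au}$ together with $\E[\sin^2\alpha] = 1/2$ gives $\E_\theta[\,\cdot\,]\geq e^{-a/2} = e^{-d^2/(8r^2)}$, and then the elementary bound $e^{-y}\geq 1-y$ yields $1 - d^2/(8r^2)$. When $d > 2r$, I use the pointwise bound $\sin^2\alpha \leq \alpha^2$ on $[0,\pi/2]$ and the symmetry of the integrand to write $\E_\theta[\,\cdot\,]\geq \tfrac{2}{\pi}\int_0^{\pi/2}e^{-a\alpha^2}\,d\alpha$, then rescale $\beta = \sqrt{a}\,\alpha$ to obtain $\tfrac{1}{\sqrt{\pi a}}\mr{erf}(\sqrt{a}\,\pi/2)$; since $a\geq 1$ in this regime the $\mr{erf}$ factor is bounded below by a constant close to $1$, and substituting $a = d^2/(4r^2)$ gives a bound of order $r/d$, comfortably covering the claimed $r/(2d)$.

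The main obstacle is the sharp upper-bound inequality $e^{-x}I_0(x) \leq (1+2x)^{-1/2}$, which is tight at $x=0$ and does not fall out of Jensen-style reasoning. I would derive it from the integral representation $e^{-x}I_0(x) = \tfrac{2}{\pi}\int_0^1 e^{-2xv^2}(1-v^2)^{-1/2}\,dv$ by a moment-generating-function comparison against a half-Gaussian law, since the measure $\tfrac{2}{\pi}(1-v^2)^{-1/2}dv$ on $[0,1]$ admits a clean majorization in this setting; alternatively one can compare the power-series coefficients of $(1+2x)(e^{-x}I_0(x))^2$ with those of $1$ termwise. The tracking of constants in the large-$d$ lower bound requires a small amount of arithmetic, but is otherwise routine once the Bessel representation is in hand.
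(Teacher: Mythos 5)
Your reduction to the one-dimensional integral $\tfrac{1}{2\pi}\int_0^{2\pi}\exp(-a\sin^2\alpha)\,d\alpha$ with $a=d^2/4r^2$ is exactly the paper's starting point (the paper arrives at $\cos^2(\theta-\phi)$ rather than $\sin^2(\theta-\phi)$, which is the same in distribution over uniform $\theta$), and your two lower bounds are correct and in fact tidier than the paper's: Jensen plus $e^{-y}\ge 1-y$ replaces the paper's appeal to ``Taylor expansion,'' and the $\sin^2\alpha\le\alpha^2$ comparison with the Gaussian integral replaces the paper's truncation to the set $\set{\cos^2\theta\le 2r^2/d^2}$; your constant $\mr{erf}(\sqrt a\,\pi/2)/\sqrt{\pi a}\ge 0.97\cdot 2r/(d\sqrt\pi)$ clears $r/2d$ with room to spare. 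The divergence is in the upper bound. You route it through the Bessel identity $e^{-a/2}I_0(a/2)$ and the inequality $e^{-x}I_0(x)\le(1+2x)^{-1/2}$, which you correctly flag as the one nontrivial step but leave unproved: the ``clean majorization'' sketch is not clean, since the arcsine-type law of $\sin^2 U$ and the law of $Z^2/2$ have equal means, so stochastic dominance fails and one would need a genuine convex-order argument; the termwise power-series comparison is likewise delicate because the coefficients of $e^{-x}I_0(x)$ alternate in sign. The paper avoids Bessel functions entirely: it bounds the integrand pointwise via $e^{-u}\le(1+u)^{-1}$ and evaluates the resulting integral in closed form, $\tfrac1\pi\int_0^\pi(1+a\cos^2\theta)^{-1}\,d\theta=(1+a)^{-1/2}$. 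Note that this two-line argument \emph{is} a proof of the inequality you are missing (set $x=a/2$), so the cleanest repair of your write-up is to drop the Bessel detour and bound the trigonometric integral directly; with that substitution your proof is complete.
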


\begin{proof}
	Write $\mb d(t) = \mc L_0\big[\mb D\big](t)$, where $\mb d$ is a one-dimensional standard Gaussian with deviation $r$. Since $\mb D$ is circular symmetric, the line projection of $\mb D$ in any angle is identical, that is, $\mc L_\theta[\mb D] = \mc L_0[\mb D]$ for every $\theta$. Also write $\mb w_i-\mb w_j = d(\cos\phi,\sin\phi)$, then 
\begin{align}\label{eqn:injective_oned_gauss}
&\big\langle\mc L_\theta[\mb D*\mb\delta_{\mb w_i}],\,\mc L_\theta[\mb D*\mb\delta_{\mb w_j}]\big\rangle \notag \\
	&\qquad\qquad  =\, \big\langle\mc L_\theta[\mb D] * \mc L_\theta[\mb\delta_{\mb w_i}],\, \mc L_\theta[\mb D]*\mc L_\theta[\mb\delta_{\mb w_j}] \big\rangle \notag\\
	&\qquad\qquad =\,  \Big\langle\,\mb d*\mb d ,\,\mb \delta_{\abs{\mb u_\theta^*(\mb w_i-\mb w_j)}}\Big\rangle\notag\\  
	&\qquad\qquad =\,  \paren{\mb d*\mb d}\paren{d\cos(\theta-\phi)} \notag \\
	&\qquad\qquad =\,\exp\paren{\tfrac{-d^2\cos^2(\theta-\phi)}{4r^2}},
\end{align}   
where the first equality is by interchanging iterated integrals; the second equality is by knowing the adjoint of convolution is correlation and $\mb d$ is symmetric; and the final equality is by observing that $\mb d*\mb d$ is a Gaussian function with variance $\sqrt 2 r$ and $(\mb d*\mb d)(0) = 1$. 

We derive the  expectation upper bound of \eqref{eqn:injective_oned_gauss} over $\theta$ as 
\begin{align} 
	&\E_{\theta} \innerprod{\mc L_\theta[\mb D*\mb\delta_{\mb w_i}]}{\,\mc L_\theta[\mb D*\mb\delta_{\mb w_j}]}  \notag \\
	&\qquad\qquad =\,  \tfrac{1}{\pi}\textstyle\int_{0}^{\pi} \exp\paren{-d^2\cos^2\theta)/4r^2}\,d\theta \notag \\
	&\qquad\qquad \leq\, \tfrac{1}{\pi}\textstyle\int_{0}^\pi\tfrac{1}{1+(d^2\cos^2\theta)/4r^2}\, d\theta \notag \\
	&\qquad\qquad = \,\tfrac{1}{\sqrt{1+d^2/4r^2}} \label{eqn:injective_exp_theta}
\end{align}
by utilizing $
	\exp(-x^2)(1+x^2)<1$ in the second inequality. 
	
As for the lower bound of \eqref{eqn:injective_oned_gauss}, from its first equality, we calculate when $d > 2 r$, then
\begin{align}
	&\tfrac{1}{\pi}\textstyle\int_{0}^{\pi} \exp\big(-(d^2\cos^2\theta)/4r^2\big)\,d\theta \notag \\
	&\quad\quad \geq\, \tfrac1\pi \textstyle\int_{0}^{\pi}\exp\paren{-(d^2/4r^2)\cdot (2r^2/d^2)}\cdot\1_{\set{\cos^2\theta\leq 2r^2/d^2}}\,d\theta\notag \\
	&\quad\quad \geq\, \tfrac2\pi\cdot \exp\paren{-\tfrac12}\cdot\Big(\tfrac\pi2 - \cos^{-1}\sqrt{2r^2/d^2} \Big) \notag \\
	&\quad\quad \geq\, \tfrac2\pi\cdot\exp\paren{-\tfrac12}\cdot  (\sqrt2r/d)  \notag \\ 
	&\quad\quad \geq\,  r / 2d. 
\end{align}
using $\cos^{-1}x\leq \frac\pi 2 - x$ for $x\in[0, 0.5]$. And when $d \leq  2 r$, we simply have 
\begin{align}
	&\tfrac{1}{\pi}\textstyle\int_{0}^{\pi} \exp\big(-(d^2\cos^2\theta)/4r^2\big)\,d\theta \geq 1-d^2/8r^2 
\end{align}
via Taylor expansion at $d/2r = 0$
\end{proof}


\Cref{lem:coherence_gauss} shows the coherence between line projections of two bell-shaped motif with radius $\approx\! r$ and center distance $d$ is dominated by the distance-to-diameter ratio $d/2r$. Because of the projection slice theorem, the matrix $\E_\theta\mb G$ is always positive definitive. However, its condition number greatly increases when the image  consists of highly overlapping local features. When the ratio is small, say $d/2r < 1$, in which the two projected motifs are overlapping, then  $\E_\theta\mb G_{ij}$ will be close to one as with the diagonals, implies $\E_\theta\mb G$ become severely ill-conditioned even in the two-sparse case. Generally speaking, line scans are  not CS-theoretical optimal sampling method for recovering images consisting of superposing discs.

\begin{figure}
\centering
%
%
\definecolor{mycolor1}{rgb}{0.00000,0.44700,0.74100}%
\definecolor{mycolor2}{rgb}{0.85000,0.32500,0.09800}%
\definecolor{mycolor3}{rgb}{0.92900,0.69400,0.12500}%
\definecolor{mycolor4}{rgb}{0.49400,0.18400,0.55600}%
\definecolor{mycolor5}{rgb}{0.46600,0.67400,0.18800}%
\definecolor{mycolor6}{rgb}{0.30100,0.74500,0.93300}%
\definecolor{mycolor7}{rgb}{0.63500,0.07800,0.18400}%
\definecolor{mycolor8}{rgb}
	{0,0,0}%

\newcommand{\plotwidth}{1.7in}
\newcommand{\plotheight}{1.6in}
\newcommand{\plotxone}{0}
\newcommand{\plotxtwo}{1.8in}
\newcommand{\plotyone}{0}

\begin{tikzpicture} 

\sffamily{

\begin{axis}[%
width=\plotheight,
height=\plotheight,
at={(\plotxone,\plotyone)},
scale only axis,
point meta min=0.0367283487797067,
point meta max=0.986995949431156,
axis on top,
xmin=-1.005,
xmax=1.005, 
xtick={\empty}, 
ytick={\empty},
xlabel style={yshift=0.06in},
xlabel={\small{20 $\times$ 20 $\times$ 1.15 motifs}},
ymin=-1.005,
ymax=1.005,
title style={yshift=-0.1in},
title={\myfont{$d/2r = 2$}},
axis background/.style={fill=white},
legend style={legend cell align=left, align=left, draw=white!15!black},
]
\addplot [forget plot] graphics [xmin=-1.005, xmax=1.005, ymin=-1.005, ymax=1.005] {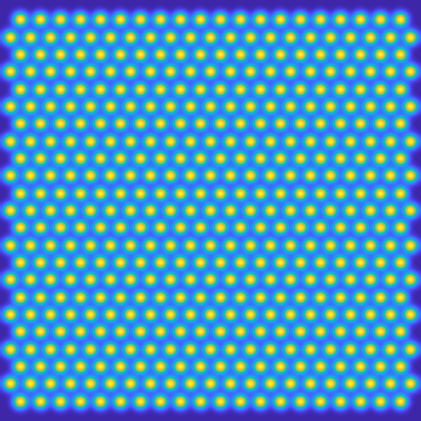};
\end{axis}

 
\begin{axis}[%
width=\plotwidth,
height=\plotheight,
at={(\plotxtwo,\plotyone)},
scale only axis,
xmin=1,
xmax=30,
ymin=0,
ymax=1,
ytick={0,.5,1}, 
xtick={1,10,20,30},
xticklabels={1\textsuperscript{2} ,10\textsuperscript{2},20\textsuperscript{2},30\textsuperscript{2}},
yticklabels={0,.5,1},
x tick label style={font=\small},
y tick label style={font=\small},
axis background/.style={fill=white},
xlabel style={yshift=0.06in},
xlabel = {\small{number of motifs}},
title style={xshift = 0.02in, yshift=-0.1in},
title={\myfont{$\lambda_{\mr{min}}(\wt{\mb G})$}},
legend image post style={scale=0.5},
legend style={at={(\plotwidth-0.72in,\plotheight+0.01in)},legend cell align=left, anchor=north, legend columns=2, draw=none, fill=none, font=\scriptsize, row sep=-0.1cm},  
]
\addplot [color=mycolor1, line width=1.0pt]
  table[row sep=crcr]{%
1	1\\
2	0.0158967288505295\\
3	0.0109273034655624\\
4	0.00580136850205513\\
5	0.00496604152984505\\
6	0.00396170067339666\\
7	0.00332868591432761\\
8	0.00324192629402198\\
9	0.0029982414542931\\
10	0.00293019615431881\\
11	0.00281617299614331\\
12	0.00277678373520093\\
15	0.00262367225184857\\
18	0.0025473221510768\\
21	0.00250342665221285\\
24	0.00248242742101104\\
27	0.00246164234882682\\
30	0.00244721629319107\\
};
\addlegendentry{$d/2r = 0.5$}

\addplot [color=mycolor2, line width=1.0pt]
  table[row sep=crcr]{%
1	1\\
2	0.138572842126947\\
3	0.123515509331121\\
4	0.101467301844314\\
5	0.0971682076843723\\
6	0.0910811274776843\\ 
7	0.0868992883149686\\
8	0.0861848932643529\\
9	0.08457458079868\\
10	0.0839773758259833\\
11	0.0830078474854945\\
12	0.0828103557500083\\
15	0.0815348073628744\\
18	0.0808754765098115\\
21	0.0804886066439046\\
24	0.0803008576038226\\
27	0.080114046784523\\
30	0.0799834056452741\\
};
\addlegendentry{$d/2r = 1.0$}

\addplot [color=mycolor3, line width=1.0pt]
  table[row sep=crcr]{%
1	1\\
2	0.292793049639804\\
3	0.275975318082198\\
4	0.249297583809281\\
5	0.243930792613369\\
6	0.235965775384912\\
7	0.230581874771191\\
8	0.229591023402445\\
9	0.227494463189831\\
10	0.226665431262875\\
11	0.225325571415376\\
12	0.225093046722408\\
15	0.223344025302189\\
18	0.222433432401212\\
21	0.221896976044152\\
24	0.221635750697412\\
27	0.221375751968901\\
30	0.221193658151067\\
};
\addlegendentry{$d/2r = 1.5$}

\addplot [color=mycolor4, line width=1.0pt]
  table[row sep=crcr]{%
1	1\\
2	0.41773738018898\\
3	0.40216442554857\\
4	0.376720601810441\\
5	0.371546252807915\\
6	0.363731767725114\\
7	0.358487005553923\\
8	0.357493603692921\\
9	0.355443285356767\\
10	0.354612970315219\\
11	0.353272832443005\\
12	0.353054966487674\\
15	0.35131115671816\\
18	0.350400925404582\\
21	0.349863905630704\\
24	0.349602075136963\\
27	0.349341463607397\\
30	0.34915884310199\\
};
\addlegendentry{$d/2r = 2.0$}

\addplot [color=mycolor5, line width=1.0pt]
  table[row sep=crcr]{%
1	1\\
2	0.511137672524018\\
3	0.497303214709367\\
4	0.47438215212293\\
5	0.469697372806201\\
6	0.462563487228047\\
7	0.457792696560457\\
8	0.456876432181875\\
9	0.455008042836332\\
10	0.454242776207689\\
11	0.453008327426233\\
12	0.452814025008376\\
15	0.451210261677647\\
18	0.450372115786245\\
21	0.449877289541121\\
24	0.449635885816378\\
27	0.449395604513768\\
30	0.449227188032362\\
};
\addlegendentry{$d/2r = 2.5$}

\addplot [color=mycolor6, line width=1.0pt]
  table[row sep=crcr]{%
1	1\\
2	0.581109953670194\\
3	0.568879006040967\\
4	0.548458895496317\\
5	0.544273857051004\\
6	0.537871895077832\\
7	0.533599315029172\\
8	0.53277238198131\\
9	0.531097466340443\\
10	0.530407145898174\\
11	0.529293917977566\\
12	0.52912185149344\\
15	0.527676848341148\\
18	0.526921171374842\\
21	0.526474867710589\\
24	0.526257063798843\\
27	0.526040272628866\\
30	0.52588829990343\\
};
\addlegendentry{$d/2r = 3.0$}

\addplot [color=mycolor7, line width=1.0pt]
  table[row sep=crcr]{%
1	1\\
2	0.634651771476399\\
3	0.62377869632602\\
4	0.605541140216459\\
5	0.601797235206787\\
6	0.596054228516361\\
7	0.592226267391555\\
8	0.591481888937974\\
9	0.58998038145576\\
10	0.589359177342937\\
11	0.588357578962061\\
12	0.588204488165075\\
15	0.586905086348096\\
18	0.586225279948148\\
21	0.585823695296751\\
24	0.585627675529162\\
27	0.585432567474739\\
30	0.585295783488426\\
};
\addlegendentry{$d/2r = 3.5$}

\addplot [color=mycolor8, line width=1.0pt]
  table[row sep=crcr]{%
1	1\\
2	0.676608301052411\\
3	0.666863097118882\\
4	0.650467837161848\\
5	0.647098517069236\\
6	0.641920780889651\\
7	0.638472473237879\\
8	0.637799850921291\\
9	0.636446740548105\\
10	0.635885542759751\\
11	0.634980793558532\\
12	0.634843518801896\\
15	0.633670178760799\\
18	0.633056163522047\\
21	0.632693390887289\\
24	0.632516292584042\\
27	0.632340018229279\\
30	0.632216431278762\\
};
\addlegendentry{$d/2r = 4.0$}
 
\end{axis}

\node at(\plotxtwo+\plotwidth-0.1in,\plotyone-0.25in) {\scriptsize $\times$ 1.15};

}

\end{tikzpicture}%
\caption{Least eigenvalue of $\wt{\mb G}$ with Gaussian motifs on hexagonal lattice. We show an example image of local features which are placed on the lattice locations (left), and calculate the least eigenvalue with varying number of motifs and distance-to-diameter ratio (right). When the motifs are highly overlapping $d/2r = 0.5$, then $\wt{\mb G}$ is almost rank-deficient; when $d/2r\geq 1$, then $\wt{\mb G}$ is stably full rank regardless of number of motifs. The result remains almost identical when the lattice is of other form such as rectangular grid, we therefore consider  the distance-to-diameter ratio is the determining factor for injectivity of line projections  even in more general settings. }\label{fig:gaussian-on-grid}
\end{figure}

\subsubsection{Injectivity of line projection of multiple motifs with minimum separation}

To extend the study of the coherence of matrix $\mb G$ to samples that contain $k > 2$ motifs $\mb D$. We first investigate a model configuration whose motif centers are allocated on a hexagonal lattice with edges of length $d$. It turns out that the smallest eigenvalue of an approximation $\mb G$  with respect to the locations $\set{\mb w_1,\ldots,\mb w_k}$ is largely determined by the distance-to-diameter ratio $d/2r$, and depends only weakly on the total number of motifs. 

In \Cref{fig:gaussian-on-grid}, we calculate an approximation of $\E_\theta\mb G$ with $\wt{\mb G}$, where 
\begin{align}
	\wt{\mb G}_{ij} = (1+\norm{\mb w_i-\mb w_j}2^2/4r^2)^{-1/2} 
\end{align}
is obtained from the upper bound in \Cref{lem:coherence_gauss} with motifs being the Gaussian function of deviation $r$ placed on hexagonal lattice. We show that when these motifs are highly overlapping with distance-to-diameter ratio $d/2r = 0.5$, the least eigenvalue of $\wt{\mb G}$ is very close to zero and the matrix is nearly rank-deficient; when the motifs are  separated, say $d/2r\geq 1$, the least eigenvalue of $\wt{\mb G}$ is steadily larger then zero and approaches one as the ratio $d/2r$ increases. Interestingly, in our experiments the least eigenvalue does not depend strongly on the number of motifs, suggesting that the distance-to-diameter ratio is the dominant factor for injectivity of line projections on motifs with hexagonal placement\footnote{The result of $\lambda_{\mr{min}}(\wt{\mb G})$ remains almost identical with other dense  motif allocation on lattice such as rectangular grid.}.  Since the hexagonal configuration is the densest circle packing on a plane \cite{toth2014regular}, we suspect that $\lambda_{\mr{min}}(\E_\theta\mb G)$ is also determined by the ratio $d/2r$ for every configurations satisfying the minimum separation property.

This conjecture gains more ground when viewing this problem from the point source super-resolution perspective \cite{candes2014towards}. It is known that an image consisting of point measures $\mb x = \sum_{i}\alpha_i\mb\delta_{\mb w_i}$  can be stably recovered from its low frequency information (with frequency cutoff $f_c$) whenever the point sources have minimum separation $d > C/f_c$ for some constant $C$, regardless of the number of such point measures in $\mb x$. In our scenario, we will show that the expected line projection $\E_\theta\mc L_\theta^*\mc L_\theta$ is also a low-pass filter; and since the local features $\mb D$ is also often consists of low frequency components, our line projections $\mc L_\Theta[\mb D*\mb X]$ can be modeled as the low-pass measurements from sparse map $\mb X$, implying stable and efficient sparse reconstruction is possible as long as $\mb X$ is enough separated under infinitely many line measurements of all angles. 

\begin{lemma}\label{lem:line_project_lowpass}\emph{[Lowpass property of line projections]} Suppose $\mb D$ is two-dimensional Gaussian of covariance $r^2\mb I$ with $\norm{\mc L_0[\mb D]}{L^2} = 1$ and $\mb X$ is finite summation of Dirac measure.  If $\theta$ is uniformly random, then $\E_\theta\mb D*\mc L_\theta^*\mc L_\theta[\mb D*\mb X]$ is a low-pass filter $\mc K$ on $\mb X$ with cut-off frequency $f_c$ satisfies
\begin{align}\label{eqn:cut-off-line-projection}
	f_c  = \tfrac1r\cdot \min\set{2r^2\eps^{-1},\, \sqrt{\abs{\log\paren{8r^2\eps^{-1}}}}+0.2}   
\end{align}
in a sense that $\max_{\norm{\mb\xi}2\geq f_c}\abs{\mc F_2\set{\mc K}(\mb\xi)} \leq \eps $.
	
\end{lemma}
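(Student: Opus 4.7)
The plan is to compute $\mathcal{K}$ explicitly in the Fourier domain via the projection slice theorem, observe that it factors as a Gaussian times a ramp $1/\|\xi\|$ term, and then choose $f_c$ so that this product falls below $\epsilon$.

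First I would extract the convolution structure. Because $\mathcal{L}_\theta^*\mathcal{L}_\theta$ commutes with 2D translations (shifts along $\mathbf{u}_\theta$ are invisible to the projection; shifts along $\mathbf{u}_\theta^\perp$ commute with the 1D project/back-project pair), whenever $\mathbf{X}$ is a finite sum of Diracs we have $\mathcal{L}_\theta^*\mathcal{L}_\theta[\mathbf{D}*\mathbf{X}] = (\mathcal{L}_\theta^*\mathcal{L}_\theta[\mathbf{D}])*\mathbf{X}$. Taking expectation over $\theta$ and convolving with $\mathbf{D}$ yields the claimed convolution form $\mathcal{K}*\mathbf{X}$ with $\mathcal{K} = \mathbf{D}*\mathbb{E}_\theta\mathcal{L}_\theta^*\mathcal{L}_\theta[\mathbf{D}]$.

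Next I would compute $\mathcal{F}_2\{\mathcal{K}\}$. The projection slice theorem gives $\mathcal{F}_1\{\mathcal{L}_\theta[f]\}(\beta) = \hat f(\beta\mathbf{u}_\theta^\perp)$, and pairing $\mathbb{E}_\theta\mathcal{L}_\theta^*\mathcal{L}_\theta[\mathbf{D}]$ against a test function via 1D Parseval converts $\tfrac{1}{\pi}\int_0^\pi\!\int_{\mathbb{R}}(\cdots)\,d\beta\,d\theta$ into $\int_{\mathbb{R}^2}(\cdots)\,d\xi/\|\xi\|$ using the polar Jacobian $d\beta\,d\theta = d\xi/\|\xi\|$. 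This yields the ramp-filter identity $\mathcal{F}_2\{\mathbb{E}_\theta\mathcal{L}_\theta^*\mathcal{L}_\theta[\mathbf{D}]\}(\xi) = \hat{\mathbf{D}}(\xi)/(\pi\|\xi\|)$, and hence $\mathcal{F}_2\{\mathcal{K}\}(\xi) = \hat{\mathbf{D}}(\xi)^2/(\pi\|\xi\|)$. Plugging in the Gaussian $\mathbf{D}$ of covariance $r^2\mathbf{I}$ with $\|\mathcal{L}_0[\mathbf{D}]\|_{L^2}=1$ gives, after elementary computation, $\hat{\mathbf{D}}(\xi)^2 = 2\sqrt{\pi}\,r\exp(-4\pi^2 r^2\|\xi\|^2)$, so
\[ |\mathcal{F}_2\{\mathcal{K}\}(\xi)| \,=\, \tfrac{2r}{\sqrt{\pi}\,\|\xi\|}\exp\!\bigl(-4\pi^2 r^2\|\xi\|^2\bigr). \]
Differentiating $u\mapsto u^{-1}e^{-au^2}$ shows it is strictly decreasing on $(0,\infty)$, so it suffices to verify the bound at $\|\xi\|=f_c$.

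Finally I would bound the tail in two regimes and keep the smaller threshold. Dropping the exponential and using $e^{-x}\le 1$ gives $|\mathcal{F}_2\{\mathcal{K}\}(\xi)| \le 2r/(\sqrt{\pi}\|\xi\|) \le \epsilon$ whenever $\|\xi\|\ge 2r/\epsilon$, producing the first term $\tfrac{1}{r}\cdot 2r^2\epsilon^{-1}$ of the stated minimum. For the second term, writing $f_c = g/r$ the requirement becomes $\tfrac{2r^2}{\sqrt{\pi}\,g}\exp(-4\pi^2 g^2)\le\epsilon$; after taking logarithms this reads $4\pi^2 g^2 + \log(\sqrt{\pi}\,g) \ge \log(2r^2/\epsilon)$, and because $4\pi^2\approx 39.5$ the choice $g = \sqrt{|\log(8r^2/\epsilon)|}+0.2$ comfortably satisfies the inequality by the dominance of the quadratic term. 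Taking the minimum of the two thresholds gives the claimed $f_c$. The main obstacle is the bookkeeping in the second regime: absorbing the $-\log g$ correction and the $1/\sqrt{\pi}$ prefactor into the constant $8$ inside the logarithm and the additive slack $0.2$ requires careful scalar inequalities, but no new conceptual idea beyond the Fourier-domain derivation.
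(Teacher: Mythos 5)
Your proposal is correct and follows essentially the same route as the paper's proof: the projection slice theorem plus the polar change of variables to obtain the ramp-filter identity $\mc F_2\{\E_\theta\mc L_\theta^*\mc L_\theta[\mb Y]\} = \wh{\mb Y}/(\pi\norm{\mb\xi}{2})$, the explicit Gaussian spectrum $\tfrac{2r}{\sqrt\pi\norm{\mb\xi}{2}}\exp(-4\pi^2 r^2\norm{\mb\xi}{2}^2)$, and the same two-regime bound (drop the exponential for $\norm{\mb\xi}{2}\geq 2r/\eps$; use the $0.2/r$ slack on the prefactor and $\exp(-4\pi^2|\log(8r^2\eps^{-1})|)\leq \eps/(8r^2)$ for the other threshold). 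Your explicit note that $\mc L_\theta^*\mc L_\theta$ commutes with translations, and the monotonicity observation reducing the check to $\norm{\mb\xi}{2}=f_c$, are minor additions the paper leaves implicit.
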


\begin{proof} We start with restating projection slice theorem as $\mc F_{1}\mc L_\theta[\mb Y] = \mc S_\theta[\mc F_2 \mb Y]$, where $\mc F_1$, $\mc F_2$ are unitary Fourier transform in one, two-dimensional Euclidean space respectively,  $\mc S_\theta$ is the slice operator defined as $\mc S_{\theta}[\mb Y](r) = \mb Y(r\mb u_\theta^\perp)$ \cite{helgason2010integral}.

Notice that $\mb Y = \mb D*\mb X\in L^1\cap L^2(\R^2)$ therefore its Fourier transform is well defined, we expand $\mc L_\theta^*\mc L_\theta$ in Fourier domain with slice operator $\mc S_\theta$, write $\wh{\mb Y} = \mc F_2\mb Y$. and derive
	\begin{align} 
	&\E_\theta\mc L_\theta^*\mc L_\theta[\mb Y](\mb w) \notag \\
	&\quad =\, \E_\theta \mc F_2^*\mc S_\theta^*\mc F_1^{-1*}\mc F_1^{-1}\mc S_\theta\mc F_2\mb Y(\mb w) \notag \\
	&\quad=\, \E_\theta \textstyle\int_{\mb\xi\in\R^2} \exp\paren{j2\pi\innerprod{\mb\xi}{\mb w}}\cdot \mc S_\theta^*[\mc S_\theta[\wh{\mb Y}]](\mb\xi)\,d\mb\xi \notag \\
	&\quad =\, \E_\theta \textstyle\int_{t\in\R}\mc S_\theta[\exp\paren{j2\pi\innerprod{\cdot}{\mb w }}](t) \cdot \mc S_\theta[\wh{\mb Y}](t) \, dt \notag \\
	&\quad = \, \tfrac1{2\pi} \textstyle\int_{\theta = 0}^{2\pi} \textstyle\int_{t\in\R}\exp(j2\pi t\innerprod{\mb u_\theta^\perp}{\mb w} ) \cdot \wh{\mb Y}(t\mb u_\theta^\perp)\, dt\,d\theta\notag \\
	&\quad = \, \tfrac{2}{2\pi}\textstyle\int_{\theta=0}^{2\pi}\int_{t\geq 0} \exp(j2\pi t\innerprod{\mb u_\theta^\perp}{\mb w} ) \cdot \wh{\mb Y}(t\mb u_\theta^\perp)\, dt\, d\theta  \notag \\    
	&\quad = \,\textstyle\int_{\mb \xi\in\R^2}\exp(j2\pi t\innerprod{\mb \xi}{\mb w})\cdot\paren{\tfrac{1}{\pi\norm{\mb \xi}2}}\cdot \wh{\mb Y}(\mb \xi)\, d\mb \xi  \notag \\
	&\quad = \, \paren{\mc F_2^{-1}\Brac{\tfrac{1}{\pi\norm{\mb \xi}2}} * \mb Y} (\mb w), \label{eqn:low_pass_line_project}     
\end{align}  
where the third equality is derived from definition of adjoint operator, sixth equality is by coordinate transformation from polar to Cartesian, and the last equality is by convolution theorem. Hence we conclude that $\E_\theta\mc L_\theta^*\mc L_\theta[\mb Y]$ is the convolution between $\mb Y$ and a lowpass kernel with spectrum decay rate $\norm{\mb \xi}2^{-1}$. 
 
When $\norm{\mc L_0[\mb D]}{L^2}=1$ and is a Gaussian function with deviation $r$, then   $\mb D(\mb w) = \tfrac{\sqrt{2r\sqrt{\pi}}}{2\pi r^2}\exp\paren{-\tfrac{\norm{\mb w}2^2}{2r^2}}$ with Fourier domain expression as $\mc F_2\set{\mb D}(\mb\xi) = \sqrt{2r\sqrt{\pi}}\exp(-2\pi^2r^2\norm{\mb \xi}2^2)$. Combine with \eqref{eqn:low_pass_line_project}, the spectrum of $\E_\theta\mb D*\mc L_\theta^*\mc L_\theta[\mb D*\cdot\,]$ becomes
\begin{align}
	&\mc F_2\{\E_\theta\mb D*\mc L_\theta^*\mc L_\theta[\mb D*\mb X]\}(\mb\xi) \notag \\
	&\qquad = \tfrac{2r}{\sqrt{\pi} \norm{\mb\xi}2}\exp(-4\pi^2 r^2\norm{\mb \xi}2^2)\cdot\mc F_2\set{\mb X}(\mb\xi). \notag \\
	&\qquad = \mc F_2\set{\mc K}(\mb\xi) \cdot\mc F_2\set{\mb X}(\mb\xi)
\end{align}
Plug in \eqref{eqn:cut-off-line-projection}, when $\norm{\mb\xi}2\geq \tfrac{2r}{\eps}$ then clearly $\abs{\mc F_2\set{\mc K}(\mb\xi)} \leq \eps$. Lastly for the other lower bound $\norm{\mb\xi}2 \geq \tfrac1r\paren{\sqrt{\abs{\log(8r^2\eps^{-1})}} + 0.2 }$, we calculate
\begin{align} 
	\abs{\mc F_2\set{\mc K}(\mb\xi)} &\leq \tfrac{2 r^2}{0.2\sqrt\pi}\cdot\exp\paren{-4\pi^2\abs{\log(8r^2\eps^{-1})}} \notag \\
	&\leq \tfrac{2r^2}{0.2\sqrt\pi}\cdot\tfrac{1}{8r^2\eps^{-1}} \leq \eps.   
\end{align}  
\end{proof}

\begin{remark} When radius of motif is sufficiently large, then the cut-off frequency $f_c$ is dominated by the cut-off frequency of motif, roughly $C/r$, and is sufficient to recover its locations as long as the separation $d$ satisfies $d > C'r$ (reflects the observation of \Cref{fig:gaussian-on-grid}). In cases with small (pointy) $\mb D$, the cut-off  frequency is mainly determined by the low-pass property of line projection, which requires minimum separation $d > C\eps/r$ for exact reconstruction. 
\end{remark}
Finally, base on \cite{candes2014towards}, when the separation condition is ensured, the image of separated discs can be recovered from infinitely many line projections via total variation minimization (or $\ell^1$ when $\mb X_0$ on discrete grid), regardless of number of discs.	

\begin{figure}[t!] 
	\centering
	\input{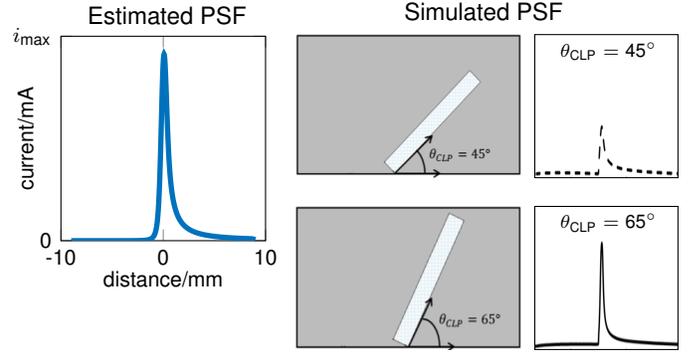}
	\caption{The point spread function of line probe. The PSF of line probe is skewed in the probe sweeping direction. We show an estimated PSF with close form used for reconstruction (left); and the software (Lab\acr{VIEW}) simulated PSF whose shape and intensity changes as the contacting angle varies (right).  }\label{fig:psf} 
\end{figure}

\subsection{Obstacles of image reconstruction from line scans} \label{sec:obstacles}
Besides the apparent nonideality of coherence of line scan measurements which is not CS theoretical optimal, this specific sampling method and its corresponding hardware limitations causes other practical nuisances during image reconstruction.

\paragraph{High coherence of line scans} To show the coherence is a cause for concern, we rewrite the linear operator $\mc L_\Theta[\mb D*\,\cdot\,]$ as $\mb A$, and consider the nonnegative Lasso
\begin{align}\label{eqn:lasso_exp}
	\min_{\mb X\geq 0} \lambda\norm{\mb X}1 + \tfrac12\norm{\mb A[\mb X] - \mb R}2^2
\end{align}
using the observed signal $\mb R = \mb A[\mb X_0]$ and linear, column normalized and coherent sampling method $\mb A$. Denote $\Omega$ as the support set of solution of \eqref{eqn:lasso_exp},  write $\mb A_{\Omega}$ as the submatrix of $\mb A$ restricted on columns of support $\Omega$,  the unique solution $\mb X$ of program \eqref{eqn:lasso_exp} (provided if $\mb A_{\Omega}$ is injective) can be written as
\begin{align}
	\begin{cases}
		\mb X_{ij} = \brac{\mb X_{0ij} - \lambda(\mb A_{\Omega}^*\mb A_{\Omega})^{-1}\1}_+&\quad {\mb w_{ij}}\in\Omega \\
		\mb X_{ij} = 0 &\quad{\mb w_{ij}\not\in\Omega}.
	\end{cases}  
\end{align}  
When $\mb A$ is coherent, columns of $\mb A$ have large inner product, implies many entries of the  matrix $\mb A_{\Omega}^*\mb A_{\Omega}$ have large, positive off-diagonal entries close to its diagonals.  When the sparse penalty $\lambda$ is large in \eqref{eqn:lasso_exp}, its solution will have incorrect relative magnitudes since $\mb A_{\Omega}^*\mb A_{\Omega}$ is not close to identity matrix as conventional CS measurements \cite{candes2005decoding}. When $\lambda$ is small, the solution of program will be highly sensitive to noise, occasionally lead to incorrect results.

\paragraph{Incomplete information of PSF of line scans} Another layer of complexity for line probe scans is the difficulty to correctly identify its PSF due to hardware limitations, especially when operating line scans in nanoscale. For instance in \Cref{fig:psf}, we show if the contacting angle between the probe and the sample varies, the corresponding PSF changes drastically in both the peak magnitude and the shape. It turns out that even with seemingly small changes of probe condition, the corresponding PSF can be inevitably variated.


\section{Reconstruction from line scans}\label{sec:algorithm}
In this section, we introduce an algorithm for reconstructing SECM images from line scans. In all following experiments, we consider a representative class of images $\mb Y$ characterized by superposing reactive species $\mb D$ at locations $\mc W = \set{\mb w_1,\ldots,\mb w_{\abs{\mc W}}}\subset \R^2$ with intensities $\set{\alpha_1,\ldots,\alpha_{\abs{\mc W}}}\subset\R_+ $. Define the activation map $\mb X_0$ as sum of Dirac measure at $\mc W$, then $\mb Y$ can simply be written as convolution between $\mb D$ and $\mb X_0$:
\begin{equation}\label{eqn:Y}
	\mb Y= \mb D*\mb X_0 = \textstyle\sum_{j=1}^{\abs{\mc W}}{\alpha_j\mb D*\mb\delta_{\mb w_j}}.
\end{equation}
The imaging reconstruction problem then can be cast as finding the best fitting sparse map  $\wh{\mb X}$ from line scans $\mb R = \mc S\{ \mb \Psi*\mc L_\Theta[\mb Y] \} $, and the reconstructed image is simply $\mb D*\wh{\mb X}$. Since all associated operations on $\mb X_0$ (convolution with $\mb D,\mb\psi$ and line projection $\mc L_\Theta$) are all linear, this becomes a sparse estimation problem, which can be solved via the Lasso. In practice, due the resolution limit of probe and the sampling operation $\mc S$, we do not aiming to find exact $\mb X$ in a continuous space. Instead, we will solve the discretized version of this sparse recovery problem, which assume $\mb X$ resides on a grid. As such, the associated Lasso problem can be written as:
\begin{align}\label{eqn:lasso_l1}
	\min_{\mb X\geq 0} \lambda \textstyle\sum_{ij}\mb X_{ij} + \tfrac12\norm{\mb R - \mc S\{\mb\Psi * \mc L_\Theta[\mb D*\mb X]\}}2^2.
\end{align} 

%

\subsection{Sparse recovery with Lasso from line projections}   
In light of \Cref{sec:line-projection-low-pass}, the measurement performance using infinitely many line scans is almost dependent only on the distance-to-diameter ratio of the local features. Since in practice, only finite number line scan is available, we want to study how many line scans will be sufficient for efficient and exact sparse image reconstruction. We do this by studying the performance of algorithm \eqref{eqn:lasso_l1}  while assuming the line scan are idealized where the PSF is ideally all-pass in the sense that $\mb\psi=\mb\delta$.
  
\begin{figure}[t!]
\centering
\begin{tikzpicture}
\newcommand{\plotxone}{0in}
\newcommand{\plotxtwo}{0.25\textwidth}
\newcommand{\plotyone}{0in}
\newcommand{\plotytwo}{0.26\textwidth}
\newcommand{\plotwidth}{0.19\textwidth}
\newcommand{\plotheight}{0.18\textwidth}

\definecolor{mycolor1}{rgb}{0.00000,0.44700,0.74100}%
\definecolor{mycolor2}{rgb}{0.85000,0.32500,0.09800}%
\definecolor{mycolor3}{rgb}{0.92900,0.69400,0.12500}%
\definecolor{mycolor4}{rgb}{0.49400,0.18400,0.55600}%
	
\sffamily{ 

\begin{axis}[
width=\plotwidth,
height=\plotheight,
at={(\plotxtwo,\plotytwo)},
scale only axis,
axis on top,
xmin=8,
xmax=168,
ymin=0,
ymax=4000,
xtick={16,64,112,160},
ytick={140,500,3600},
xticklabels={\scriptsize 16,\scriptsize 64,\scriptsize 112,\scriptsize 160},
yticklabels={\scriptsize 140,\scriptsize 500,\scriptsize 3.6k},
axis background/.style={fill=white},
legend image post style={scale=0.5},
legend style={at={(0.82in,\plotheight/2+0.45in)},legend cell align=left, font=\scriptsize, align=left, draw=none, fill=none}
]

\addplot [color=mycolor1, line width=1.5pt]
  table[row sep=crcr]{%
16  3600\\
32  3600\\
48  3600\\
64  3600\\
80  3600\\
96  3600\\
112 3600\\
128 3600\\
144 3600\\
160 3600\\
};
\addlegendentry{{point probe}}

\addplot [color=mycolor2, line width=1.5pt]
  table[row sep=crcr]{%
16  140\\
32  180\\
48  220\\
64  260\\
80  300\\
96  340\\
112 380\\
128 420\\
144 460\\
160 500\\
};
\addlegendentry{{line probe}}
\end{axis}

\begin{axis}[
width=\plotwidth,
height=\plotheight,
at={(\plotxtwo,\plotyone)},
scale only axis,
axis on top,
xmin=15,
xmax=125,
ymin=0,
ymax=2667,
xtick={20,40,60,80,100,120},
ytick={140,400,790,2400},
xticklabels={\scriptsize{20},\scriptsize{40},\scriptsize{60},\scriptsize{80},\scriptsize{100},\scriptsize{120}},
yticklabels={\scriptsize{140},\scriptsize{400},
    \scriptsize{790},
    \scriptsize{2.4k}},
axis background/.style={fill=white},
legend image post style={scale=0.5},
legend style={at={(0.82in,\plotheight/2+0.6in)},legend cell align=left, font=\scriptsize, align=left, draw=none, fill=none} 
] 

\addplot [color=mycolor1, line width=1.5pt]
  table[row sep=crcr]{%
20  400\\
30  600\\
40  800\\
50  1000\\
60  1200\\
70  1400\\
80  1600\\
90  1800\\
100 2000\\
110 2200\\
120 2400\\
};
\addlegendentry{{point probe}}

\addplot [color=mycolor2, line width=1.5pt]
  table[row sep=crcr]{%
20  140\\
30  206\\
40  272\\
50  337\\
60  402\\
70  466\\
80  531\\
90  595\\
100 660\\
110 724\\
120 788\\
};
\addlegendentry{{line probe}}
\end{axis}

\begin{axis}[
width=\plotwidth,
height=\plotheight,
at={(\plotxone,\plotytwo)},
scale only axis,
axis on top, 
axis line style={draw=none},
tick style={draw=none},
xmin=0, 
xmax=180,
ymin=1.7,
ymax=22.5, 
xtick={16,64,112,160},
ytick={3,7,11,15,19,21}, 
xticklabels={\scriptsize{16},\scriptsize{64},\scriptsize{{112}},\scriptsize{160}},
yticklabels={\scriptsize{3},\scriptsize{{7}},\scriptsize{11},\scriptsize{15},\scriptsize{19},\scriptsize{21}},
axis background/.style={fill=white},
legend image post style={scale=0.5},
legend style={at={(1.0in,1.33in)}, legend cell align=left, align=left, draw=none, fill=none, font=\scriptsize}
]
\addplot [forget plot] graphics [xmin=0, xmax=180, ymin=1.7, ymax=22.5] {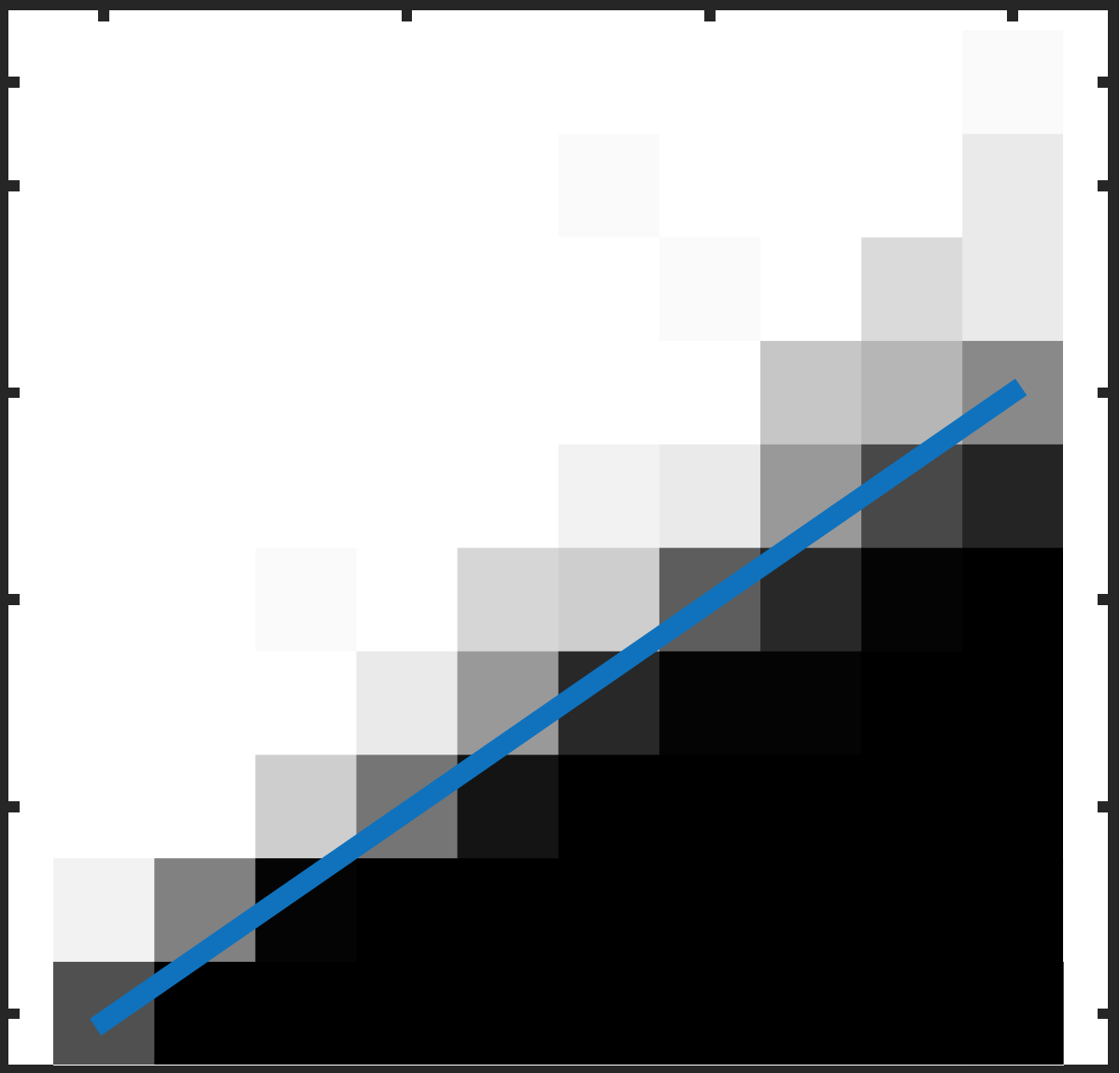};

\addplot [color=mycolor1, line width=1.5pt]
  table[row sep=crcr]{%
  0 0\\
};
\addlegendentry{$\approx$ 50\% recover}
\end{axis}

\begin{axis}[
width=\plotwidth,
height=\plotheight,
at={(\plotxone,\plotyone)},
scale only axis,
axis on top,
axis line style={draw=none},
tick style={draw=none},
xmin=10,
xmax=130,
ymin=4.3,
ymax=16,
xtick={20,40,60,80,100,120},
ytick={5,7,9,11,13,15},
xticklabels={\scriptsize{20},\scriptsize{40},\scriptsize{60},\scriptsize{80},\scriptsize{100},\scriptsize{120}},
yticklabels={\scriptsize{5},\scriptsize{7},\scriptsize{9},\scriptsize{11},\scriptsize{13},\scriptsize{15}},
axis background/.style={fill=white},
legend image post style={scale=0.5},
legend style={at={(1.1in,1.7in)}, legend cell align=left, align=left, draw=none, fill=none, font=\scriptsize}
]
\addplot [forget plot] graphics [xmin=10, xmax=130, ymin=4.3, ymax=16] {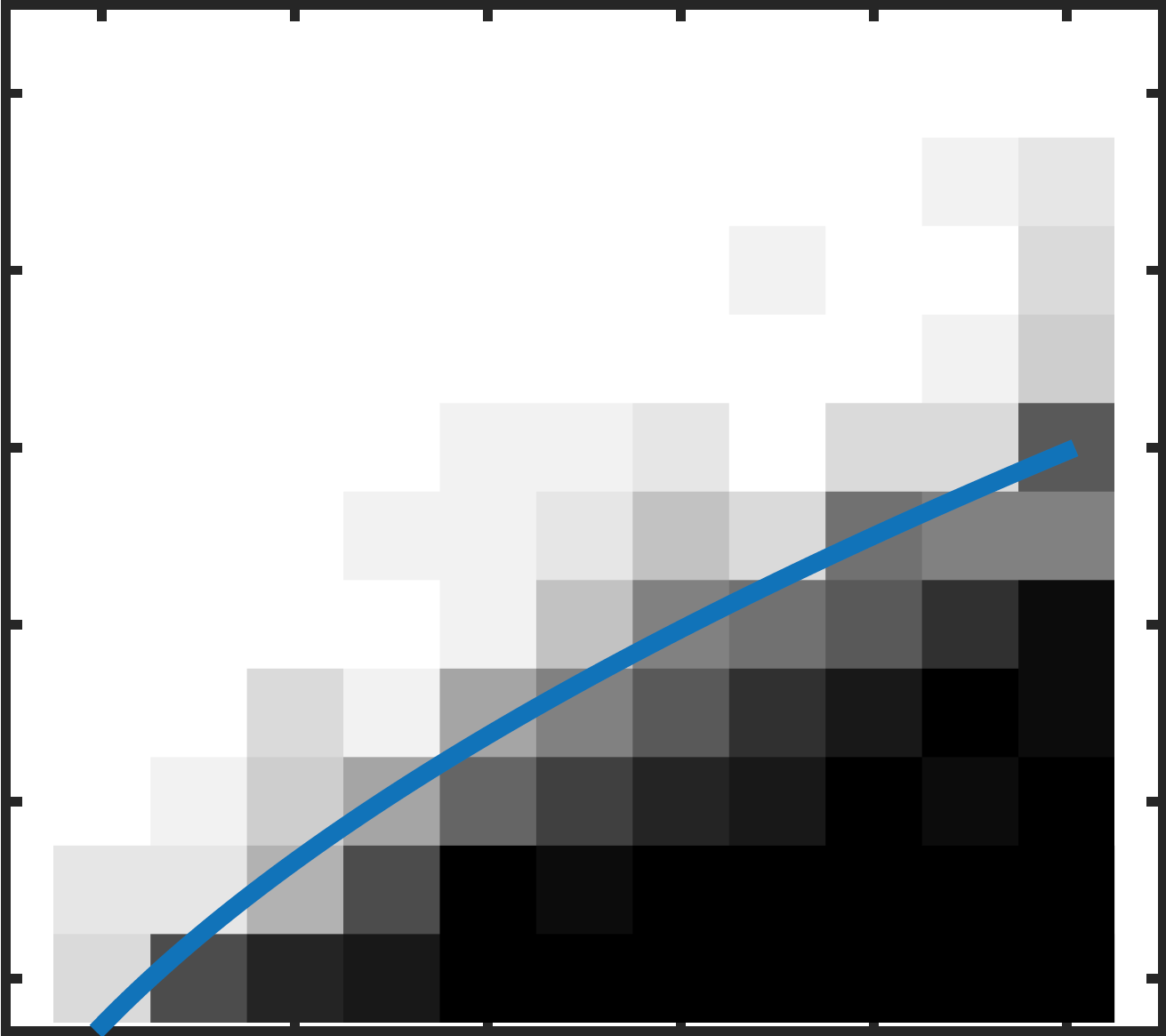}; 

\addplot [color=mycolor1, line width=1.5pt]
  table[row sep=crcr]{%
  0 0\\
};
\addlegendentry{$\approx$ 50\% recover}
\end{axis}

\node at (\plotxone/2+\plotxtwo/2+\plotwidth/2,\plotytwo+\plotheight + 0.1in) {Fixed image size};
\node at (\plotxtwo/2+\plotxone/2+\plotwidth/2,\plotyone+\plotheight + 0.1in) {Fixed image density};

\node at (\plotxone+\plotwidth/2,\plotytwo-0.20in) {\footnotesize number of discs};

\node at (\plotxtwo+\plotwidth/2,\plotytwo-0.20in) {\footnotesize number of discs};

\node[rotate=90] at (\plotxone-0.25in,\plotytwo+\plotheight/2) {\footnotesize number of lines};

\node[rotate=90] at (\plotxtwo-0.27in,\plotytwo+\plotheight/2) {\footnotesize number of lines};

\node at (\plotxone+\plotwidth/2,\plotyone-0.2in) {\footnotesize number of discs};

\node at (\plotxtwo+\plotwidth/2,\plotyone-0.2in) {\footnotesize number of discs};

\node[rotate=90] at (\plotxone-0.25in,\plotyone+\plotheight/2) {\footnotesize number of samples};

\node[rotate=90] at (\plotxtwo-0.27in,\plotyone+\plotheight/2) {\footnotesize  number of samples };

}

\end{tikzpicture}
\caption{Phase transition \cite{o2018scanning} of fixed image size (top) and fixed density (bot) on support recovery with Lasso. In each experiments, $d/2r \geq 1$ is ensured. In either cases, the phase transitions (left) show the number of samples required is almost linearly proportional to the number of discs for exact reconstruction.  And the the advancement of scanning efficiency (right) is presented in comparison with the point probe scans. For the fixed size case, we let (image area)/(disc area) $\approx 1200$; for the fixed density case, we let density $\approx (1/6)\cdot$(max density).}\label{fig:PT}
\end{figure}    
  
\Cref{fig:PT} shows the reconstruction performance from line scans with varying number of lines used and number of discs in the target image $\mb Y$. Each image $\mb Y$ is  generated by randomly populating the discs of size $r$ while satisfying $d/2r \geq 1$ via rejection sampling, and the scan angles are also uniformly random chosen. Here, two experiment settings are presented. The first is assumed that the imaging area of line scan is fixed (so the density increases linearly with more discs) and the second is considering the cases where the density is a constant (so the imaging area is proportional to the disc amount).  In the phase transition (PT) image (\Cref{fig:PT}, left), each pixel represents the average of 50 experiments; and in each experiment, given random image $\mb Y$ and its line scans of randomly chosen angles, if solving \eqref{eqn:lasso_l1} correctly identify the support map of $\mb Y$, then the algorithm succeeds, and vice versa. It shows clear transition lines in both PT images, and the comparison of scanning time between line/point probes shows clear improvement of scanning efficiency.

  Interestingly if we compare the result with CS theory, which asserts the number measurement of samples required is close to linear proportional to signal sparsity; here, though the line scans are not CS-optimal, both PT images exhibits similar phenomenon. When the image size is fixed (up), total number of samples $m$ is proportional to the line count $N$, with PT transition line showing linear proportionality between number of line scans and discs $N\propto k$, gives $m\propto k$; on the other hand, when the image density is fixed, the number of samples $m$ is proportional to $N \times \sqrt k$\footnote{With fixed density, imaging area is proportional to disc count, and the number of samples is $(\text{line count})\times\sqrt{(\text{imaging area})} = N \times \sqrt k$.} while the transition line in PT is showing  $N \propto\sqrt k$,  again suggests linear proportionality between the number of measurements and sparsity would be  $m\propto N\sqrt k\propto k$. To wrap up, these experimental results hinted that if minimum separation of discs are ensured, then to ensure exact signal reconstruction with efficient algorithm, the number of samples required is approximately linearly proportional to the sparsity of image. 

Finally, to formally elucidate the sample time reduction from point probe to line scans,  we compare the consumed scanning time using different probes in both settings under specific scenarios. (\Cref{fig:PT}, right). In the fixed area experiment we let the image area be $3\!\times\!3\, \mr{mm}^2$ and the disc radius and the image resolution are both $50\mr{\mu m}$ (image area/disc area ratio around $1200$); for the fixed density we let all experiments have equal density $20$ $\text{discs}/\mr{mm}^2$ (nearly $1/6$ of maximum density in separating case) with same resolution. Both of the results show clear improvement of scanning efficiency, with reduction of scanning time by 3 to  10 times under these signal settings.

 In either case, line measurements are substantially more efficient than measurements with a point probe. Realizing this gain in practice requires us to modify the Lasso to cope with the following nonidealities: (i)  line scans are coherent, (ii) the PSF $\mb \psi$ is typically only partially known, and (iii) naive approaches to computing with line scans are inefficient when the target resolution is large. Below, we show how to address these issues, and give a complete reconstruction algorithm.

%

\subsection{Computation of line projection}
  
\subsubsection{Fast computation of discrete line projection}

The line projection of an image $\mb Y$ in direction of angle $\theta$ is equivalent to the line projection at  $0^\circ$ of clockwise rotated $\mb Y$  by angle $\theta$. This enables an efficient line projection computationally via fast image rotation with shear transform in Fourier domain \cite{larkin1997fast}.  

The clockwise rotation of image $\mb Y$ by angle $\theta$ is 
\begin{align}\label{eqn:rotate_fourier_1}
	\mr{Rot}_\theta\brac{\mb Y}(x,y) \,=\, \mb Y\paren{\begin{bmatrix}
		\cos\theta & -\sin\theta \\ \sin\theta & \cos\theta 
	\end{bmatrix}\begin{bmatrix}
		x \\ y
	\end{bmatrix}}   
\end{align} 
where the rotational matrix can be decomposed into three shear transforms  
\begin{align}
	\begin{bmatrix}
		\cos\theta & -\sin\theta \\ \sin\theta & \cos\theta 
	\end{bmatrix} = \begin{bmatrix}
 		1 & 0 \\ \tan\tfrac\theta2 & 1
 \end{bmatrix} \begin{bmatrix}
 		1 & -\sin\theta \\ 0 & 1
 \end{bmatrix} \begin{bmatrix}
 		1 & 0 \\ \tan\tfrac\theta2 & 1
 \end{bmatrix}; \notag 
\end{align}  
write both $x$, $y$-shear transforms as 
\begin{align}
	\mr{Shr}\text{-}\mr{x}_s[\mb Y](x,y) &= \mb Y(x+sy,y),\notag \\
	\mr{Shr}\text{-}\mr{y}_t[\mb Y](x,y) &= \mb Y(x,y+tx),\notag
\end{align}
then  
\begin{align}
	\mr{Rot}_\theta[\mb Y] = \mr{Shr}\text{-}\mr{y}_{\tan\tfrac{\theta}2} \circ \mr{Shr}\text{-}\mr{x}_{-\sin\theta} \circ \mr{Shr}\text{-}\mr{y}_{\tan\tfrac{\theta}2}\brac{\mb Y}. 
\end{align} 
Each of the shear transform can be efficiently computed in Fourier domain. Define 
\begin{align}\label{eqn:shear-fourier}
	\wh{\mc S}_{x,t}(u,y)= e^{j2\pi t y u},\quad \wh{\mc S}_{y,t}(x,v)= e^{j2\pi t x v}, 
\end{align}
and $\mc F_x$, $\mc F_y$ as $n$-DFT in $x$,$y$-domain, where
\begin{align}
	\mc F_x\{\mb Y\}(u,y) \,=\, \textstyle\sum_{x}\mb Y(x,y)e^{-j2\pi x u },  \\
	\mc F_y\{\mb Y\}(x,v) \,=\, \textstyle\sum_{y}\mb Y(x,y)e^{-j2\pi y v}. \label{eqn:dft-xy} 
\end{align}
From \eqref{eqn:shear-fourier}-\eqref{eqn:dft-xy}, the $y$-shearing transform can be written as
\begin{align}  
	\mb Y(x,y+tx) &= \textstyle\sum_{y'} \mb Y(x,y')\mb\delta(y'-tx-y) \notag \\
	& = \tfrac1n\textstyle\sum_{y'}\mb Y(x,y')\textstyle\sum_v e^{-j2\pi v(y'-tx - y)} \notag \\
	& = \tfrac1n\textstyle\sum_v\paren{\textstyle\sum_{y'} \mb Y(x,y')e^{-j2\pi v(y'-tx)}}e^{j2\pi v y} \notag \\
	& = \mc F_y^{-1}\big[\mc F_y\brac{\mb Y}\circ \wh{\mc S}_{y,t}\big];  
\end{align}
and $x$-shear transform likewise,  
\begin{align}\label{eqn:rotate_fourier_5}
	\mb Y(x + ty , y) = \mc F_x^{-1}\big[\mc F_x\brac{\mb Y}\circ\wh{\mc S}_{x,t} \big].
\end{align}
Combine \eqref{eqn:rotate_fourier_1}-\eqref{eqn:rotate_fourier_5}, we obtain a computational efficient algorithm for line projections \Cref{alg:line-project}.

\begin{algorithm}[h]\label{alg:line-project}
\myfont{
\caption{Fast computational discrete line projections}\label{alg:line-project}
\begin{algorithmic}
	\Require Discrete image $\mb Y\in\R^{n\times n}$, line scan angles $\set{\theta_1,\ldots,\theta_m}$.
	\For{$i=1,\dots, m$}
	\State $y$-shearing: $\mb Y \gets \mc F_y^{-1}\big[\mc F_y\brac{\mb Y} \circ \wh{\mc S}_{y,\tan(\theta_i/2)}\big]$;  
	\State $x$-shearing: $\mb Y \gets \mc F_x^{-1}\big[\mc F_x\brac{\mb Y} \circ \wh{\mc S}_{x,-\sin\theta_i}\big]$;
	\State $y$-shearing:  $\mb Y \gets \mc F_y^{-1}\big[\mc F_y\brac{\mb Y} \circ \wh{\mc S}_{y,\tan(\theta_i/2)}\big]$;  
	\For{$t = 1,\ldots,n$} 
	\State $\mb R_i(t) \gets \frac{1}{\sqrt m}\sum_{y}\mb Y(t,y)$;
	\EndFor
	\EndFor
	\Ensure Discrete lines $\mc L_\Theta[\mb Y] = \set{\mb R_1,\ldots,\mb R_m}\in\R^{n\times m}$
	\end{algorithmic}}
\end{algorithm}
Since the image $\mb Y$ is discrete, rotation will naturally incur interpolation error. To mitigate this effect, it is advised to limit the rotation operation to angle $\theta\in\brac{-45^\circ, 45^\circ}$ in \Cref{alg:line-project}, then flip the image vertically or horizontally to form the image rotated by $[-180^\circ,180^\circ)$.

Although the Fourier rotation method demands $O(n^2\log n)$ for computational time, which is slightly larger then the direct rotation $O(n^2)$, in practice we found Fourier rotation more appealing: its actual computational time is usually slightly better then other methods, since it gets around the problematic pixelated interpolation from direct rotation; and more importantly,  its adjoint is easy to calculate in a similarly explicit manner as well.

\subsubsection{Adjoint of line projection} 
 The adjoint operator\footnote{We invoke the canonical definition of inner product of $L^2$-space for both image and lines. For every images $\mb Y,\mb Y'\in L^2(\R^2)$, we define $\innerprod{\mb Y}{\mb Y'} = \int \mb Y(\mb w)\mb Y'(\mb w)\,d\mb w $; and for every lines $\wt{\mb R},\wt{\mb R}'\in L^2(\R\times [m])$, we define $\langle\wt{\mb R},\,\wt{\mb R}'\rangle = \sum_{i=1}^m\int\wt{\mb R}_i(t)\wt{\mb R}_i'(t)\, dt$.} of line projections $\mc L_\Theta^*:L^2([m]\times\R)\to L^2(\R^2)$ is deeply connected with the well-known tomography image reconstruction technique \emph{back projection}. The adjoint of a single line projection $\mc L^*_{\theta_i}:L^2(\R)\to L^2(\R^2)$ of scanning angle $\theta_i$ is exactly the back projection of a continuous line  $\wt{\mb R}_i$ which generates an image $\mc L_{\theta_i}^*[\wt{\mb R}_i]$ whose value over $\ell_{\theta_i,t}$ defined in in \eqref{eqn:line_set} is equivalent to $\wt{\mb R}_i(t)$:  
\begin{align}
	\mc L_{\theta_i}^*[\wt{\mb R}_i](\mb w) = \wt{\mb R}_i(t),\qquad \forall\,\mb w\in\ell_{\theta_i,t}, 
\end{align}
then incorporate with definition of $\ell_{\theta_i,t}$, we obtain a simpler form for $\mc L^*_{\theta_i}$ as 
\begin{align}\label{eqn:back_project_one_line}
	\mc L_{\theta_i}^*[\wt{\mb R}_i](\mb w) = \wt{\mb R}_i(\langle\mb u_{\theta_i}^\perp\mb w\rangle).
\end{align}
Extending the derivation of \eqref{eqn:back_project_one_line} to $m$-lines $\wt{\mb R}$, the back projection of $m$ angles $\mc L_\Theta^*$ on $\wt{\mb R}$ is the superposition of images from all $m$ back projected lines $\mc L_{\theta_i}^*[\wt{\mb R}_i]$ of different scanning angles:
\begin{align}\label{eqn:back_project}
	\mc L^*_{\Theta}[\wt{\mb R}](\mb w) &\,=\, \tfrac{1}{\sqrt m}\textstyle\sum_{i=1}^m\mc L^*_{\theta_i}[\wt{\mb R}_i](\mb w)\notag \\
	&\,=\,  \tfrac{1}{\sqrt m} \textstyle\sum_{i=1}^m\wt{\mb R}_i(\innerprod{\mb u_{\theta_i}^\perp}{\mb w}).
\end{align} 
In the following proposition, we show that the line projections defined in \eqref{eqn:back_project} is indeed the adjoint operator of line projections. 

\begin{figure}[t!]
	\centering	
 	\input{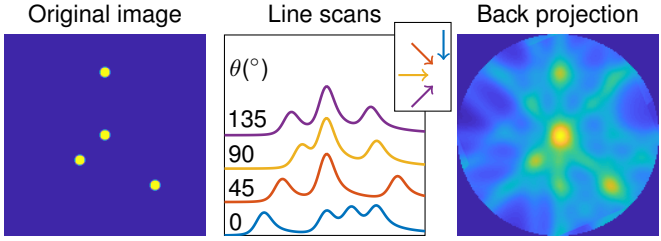}
	 
	\caption{ Back projection image from the scan lines. We demonstrate a simple example (left) where four discs are line projected with angles $\set{0^\circ,45^\circ,90^\circ,135^\circ}$ then undergo convolution with the simulated PSF (mid). Here, the arrows indicates the probe sweeping direction. The back projection image (right) is the superposition of back projection image of each line; and the back projection of  a single line $\mb R_\theta$ assigns value $\mb R_\theta(t)$ along the sweeping directions (arrows) onto the support $\ell_{\theta,t}$ for every $t$. }\label{fig:back_project}
\end{figure}

\begin{proposition}\label{prop:backproject_is_adjoint} The back projection $\mc L_\Theta^*$ in \eqref{eqn:back_project} is the adjoint of line projection $\mc L_\Theta$ in \eqref{eqn:line_project}, where 
\begin{align}\label{eqn:back_project_adjoint_of_line_project}
	\langle\wt{\mb R},\mc L_\Theta[\mb Y]\rangle = \langle\mc L_\Theta^*[\wt{\mb R}],\mb Y\rangle.
\end{align} 
\end{proposition}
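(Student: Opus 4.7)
The plan is to establish the adjoint identity by a direct computation: expand both sides using the definitions \eqref{eqn:line_project} and \eqref{eqn:back_project}, then convert one into the other via a single orthogonal change of variables per scan angle. Nothing analytically deep is required; the only substantive observation is that the parametrization $(s,t)\mapsto s\mb u_{\theta_i}+t\mb u_{\theta_i}^\perp$ is a rotation of $\R^2$, hence measure-preserving.

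Concretely, I would start with the left-hand side and peel off the definition of the inner product on $L^2(\R\times[m])$:
\begin{align}
\langle\wt{\mb R},\mc L_\Theta[\mb Y]\rangle
&\,=\,\tfrac{1}{\sqrt m}\textstyle\sum_{i=1}^m\int_{t\in\R}\wt{\mb R}_i(t)\,\mc L_{\theta_i}[\mb Y](t)\,dt\notag\\
&\,=\,\tfrac{1}{\sqrt m}\textstyle\sum_{i=1}^m\int_{t\in\R}\!\!\int_{s\in\R}\wt{\mb R}_i(t)\,\mb Y\!\paren{s\mb u_{\theta_i}+t\mb u_{\theta_i}^\perp}\,ds\,dt.\notag
\end{align}
Then I would change variables inside the double integral by writing $\mb w = s\mb u_{\theta_i}+t\mb u_{\theta_i}^\perp$. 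Since $\set{\mb u_{\theta_i},\mb u_{\theta_i}^\perp}$ is an orthonormal basis, this map has Jacobian determinant of unit magnitude, so $ds\,dt = d\mb w$; moreover $t = \innerprod{\mb u_{\theta_i}^\perp}{\mb w}$ by orthonormality. Fubini is justified under the usual $L^2$ assumptions (or, in the distributional setting used for the Diracs in \Cref{thm:small_separated_disc}, interpreted in the standard dual-pairing sense).

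After substitution the sum becomes
\begin{align}
\langle\wt{\mb R},\mc L_\Theta[\mb Y]\rangle
&\,=\,\int_{\mb w\in\R^2}\!\brac{\tfrac{1}{\sqrt m}\textstyle\sum_{i=1}^m\wt{\mb R}_i\!\paren{\innerprod{\mb u_{\theta_i}^\perp}{\mb w}}}\mb Y(\mb w)\,d\mb w\notag\\
&\,=\,\int_{\mb w\in\R^2}\mc L_\Theta^*[\wt{\mb R}](\mb w)\,\mb Y(\mb w)\,d\mb w \,=\, \langle\mc L_\Theta^*[\wt{\mb R}],\mb Y\rangle,\notag
\end{align}
where the first equality uses \eqref{eqn:back_project} and the last uses the $L^2(\R^2)$ inner product convention stated in the footnote accompanying \eqref{eqn:back_project_adjoint_of_line_project}. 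This completes the verification.

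The only obstacle worth flagging is not a mathematical one but a bookkeeping one: keeping the normalization factor $1/\sqrt m$ consistently inside (rather than outside) one of the two operators, so that the same constant appears on both sides of the identity. Once the change-of-variables is performed per angle $\theta_i$, the sum over $i$ and the factor $1/\sqrt m$ in \eqref{eqn:line_project} line up exactly with the definition of $\mc L_\Theta^*$ in \eqref{eqn:back_project}, and there is nothing further to do.
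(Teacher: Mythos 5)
Your proposal is correct and follows essentially the same route as the paper's proof: expand the inner product on $L^2(\R\times[m])$, insert the definition of $\mc L_{\theta_i}$, perform the measure-preserving change of variables $\mb w = s\mb u_{\theta_i}+t\mb u_{\theta_i}^\perp$ so that $t=\innerprod{\mb u_{\theta_i}^\perp}{\mb w}$, and use linearity to identify the resulting kernel with $\mc L_\Theta^*[\wt{\mb R}]$. Your additional remarks on the Jacobian and the placement of the $1/\sqrt m$ normalization are accurate but do not change the argument.
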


\begin{proof} For any lines $\mb R\in L^2(\R\times[m])$, image $\mb Y\in L^2(\R^2)$, and any angles $\Theta = \set{\theta_1,\ldots,\theta_m}$,  
\begin{align} 
 	\langle\wt{\mb R},\mc L_{\Theta}[\mb Y]\rangle &\;=\;  \tfrac{1}{\sqrt m}\textstyle\sum_{i=1}^m\textstyle\int \wt{\mb R}_i(t)\mc L_{\theta_i}[\mb Y](t)\,dt\notag \\
 		&\;=\; \tfrac{1}{\sqrt m} \textstyle\sum_{i=1}^m \int \wt{\mb R}_i(t)\textstyle\int \mb Y(s\mb u_{\theta_i} + t\mb u_{\theta_i}^\perp)\,ds\,dt \notag \\
 		&\;=\; \tfrac{1}{\sqrt m}\textstyle\sum_{i=1}^m\textstyle\int \wt{\mb R}_i\paren{\innerprod{\mb w}{\mb u_{\theta_i}^\perp}}\mb Y(\mb w)\, d\mb w   \notag \\ 
 		&\;=\; \textstyle\int\big(\frac{1}{\sqrt m}\textstyle\sum_{i=1}^m\wt{\mb R}_i\paren{\innerprod{\mb w}{\mb u_{\theta_i}^\perp}}\big)\mb Y(\mb w)\,d\mb w \notag \\
 		&\;=\; \langle\mc L_\Theta^*[\wt{\mb R}],\mb Y\rangle. 
 \end{align}
 The first equality comes  from the definition of inner product in lines space; the second comes  from \eqref{eqn:line_project}; the third uses change of variable where $\mb w = s\mb u_\theta + t\mb u_\theta^\perp$ for every $\theta$; the fourth comes from linearity;  and the last equality from definition of inner product in image space.
 \end{proof}

\subsubsection{Fast computation of discrete back projection}
Similar to the line projection, the discrete back projection of a single line $\mb R_i\in\R^n$ at angle $\theta$ is the image $\mb Y_i = [\mb R_i,\mb R_i,\cdots,\mb R_i]\in\R^{n\times n}$ counterclockwise rotated by $\theta$, and the back projection of multiple lines is the sum of all such images, as shown in \Cref{fig:back_project}.  The discrete back projection thereby can be also calculated efficiently in Fourier domain, as presented in \Cref{alg:back-project}.

\begin{algorithm}[h]\myfont{
\caption{Fast computational discrete back projections}\label{alg:back-project}
\begin{algorithmic}
	\Require Discrete lines $\set{\mb R_1,\ldots,\mb R_m}\in\R^{n\times m}$, line scan angles $\set{\theta_1,\ldots,\theta_m}$.
    \State Initialize $\mb Y \gets \mb 0 \in\R^{n\times n}$; 
	\For{$i=1,\dots, m$}
		\For{$x = 1,\ldots,n$} 
		\State $\mb Y_i(x,:) \gets \frac{1}{\sqrt m}\mb R$;
		\EndFor 
	\State $y$-shearing: $\mb Y_i \gets \mc F_y^{-1}\big[\mc F_y\brac{\mb Y_i} \circ \wh{\mc S}_{y,-\tan(\theta_i/2)}\big]$;  
	\State $x$-shearing: $\mb Y_i \gets \mc F_x^{-1}\big[\mc F_x\brac{\mb Y_i} \circ \wh{\mc S}_{x,\sin\theta_i}\big]$;
	\State $y$-shearing:  $\mb Y_i\gets \mc F_y^{-1}\big[\mc F_y\brac{\mb Y_i} \circ \wh{\mc S}_{y,-\tan(\theta_i/2)}\big]$; 
	\State $\mb Y\gets \mb Y + \mb Y_i$  
	\EndFor
	\Ensure Discrete image $\mb Y\in\R^{n\times n}$
	\end{algorithmic}}
\end{algorithm}
\begin{remark} The discrete back projection from \Cref{alg:back-project} is the adjoint operator of discrete line projection from \Cref{alg:line-project}, which satisfies $\langle\wt{\mb R},\mc L_\Theta[\mb Y]\rangle = \langle\mc L_\Theta^*[\wt{\mb R}],\mb Y\rangle$.
	
\end{remark}

\subsection{Coping with nonidealities}
As aforementioned in \Cref{sec:obstacles}, the vanilla Lasso formulation in \eqref{eqn:lasso_l1} does not provide a convincing solution for practical problems in SECM with line probe, due to the high coherence of line projections and the nonidealities of PSF. These issues can be remedied by implementing well known techniques such as reweighting and blind calibration.

\subsubsection{Reweighting  Lasso for coherent measurements} \label{sec:alg-reweight}
 To cope with the coherence phenomenon, we adopt the reweighting scheme \cite{candes2008enhancing} by solving Lasso formulation \eqref{eqn:lasso_l1} multiple times while updating penalty variable $\mb\lambda$ in each iterate. At $k$-th iterate, the algorithm chooses the regularizer $\mb\lambda$ in \eqref{eqn:lasso_l1} base on the previous outcome of lasso solution $\mb X^{(k)}$, where
\begin{align}\label{eqn:lda_reweight}
	\mb \lambda_{ij}^{(k)}\gets C(\mb X^{(k-1)}_{ij}+ \eps)^{-1} 
\end{align}
Reweighting \cite{candes2008enhancing} is a technique in sparse recovery which is typically utilized for enhancing the sparsity regularizer, by solving Lasso formulation \eqref{eqn:lasso_l1} multiple times while updating penalty variable $\mb\lambda$ in each iterate. At $k$-th iterate, the algorithm chooses the regularizer $\mb\lambda^{(k)}_{ij}$ based on the previous outcome of lasso solution $\mb X^{(k)}$, where
\begin{align}\label{eqn:lda_reweight}
	\mb \lambda_{ij}^{(k)}\gets C(\mb X^{(k-1)}_{ij}+ \eps)^{-1} 
\end{align}
with $\eps$ being the machine precision constant and $C$ being close to the smooth part in \eqref{eqn:lasso_l1}. The effect of reweighting method is two-fold: (i) it is a majorization-minimization algorithm of sparse regression using  $\log$-norm as sparsity  surrogate \cite{candes2008enhancing}, hence, discovers sparse solution more effectively compared to the $\ell^1$-norm in Lasso; and (ii) the sparsity surrogate in final stages of reweighting approaches $\ell^0$-norm, by seeing $\tfrac{\mb X_{ij}^{(k+1)}}{\mb X_{ij}^{(k)}+\eps}\approx 1$ if $\mb X_{ij}^{(k)}\neq 0$ as $k\to\infty$. As a result, in the final stages, problem \eqref{eqn:lasso_l1} effectively turns into least squares, restricted to the support of $\mb X$, which produces a sparse solution with correct magnitude. \Cref{fig:reweight} (left) displays an example of reweighting scheme, showing better reconstruction result than vanilla Lasso. 

 \begin{figure}[t!]
	\centering
%
%
\begin{tikzpicture}
\newcommand{\plotwidth}{0.8in}
\newcommand{\plotheight}{0.8in}
\newcommand{\plotxone}{0in}
\newcommand{\plotxtwo}{0.9in}
\newcommand{\plotxthree}{1.8in}
\newcommand{\plotxfour}{2.7in}
\newcommand{\plotyone}{0in}

\begin{axis}[%
width=\plotwidth,
height=\plotwidth,
at={(\plotxone,\plotyone)},
scale only axis,
axis on top,
xmin=-3.025,
xmax=3.025,
ymin=-3.025,
ymax=3.025,
ytick={\empty},
xtick={\empty},
axis background/.style={fill=white},
]
\addplot [forget plot] graphics [xmin=-3.025, xmax=3.025, ymin=-3.025, ymax=3.025] {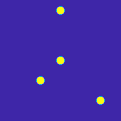};
\end{axis}

\begin{axis}[%
width=\plotwidth,
height=\plotwidth,
at={(\plotxtwo,\plotyone)},
scale only axis,
axis on top,
xmin=-3.025, 
xmax=3.025,
ymin=-3.025,
ymax=3.025,
xtick={\empty},
ytick={\empty},
axis background/.style={fill=white},
]
\addplot [forget plot] graphics [xmin=-3.025, xmax=3.025, ymin=-3.025, ymax=3.025] {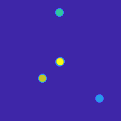};
\end{axis}

\begin{axis}[%
width=\plotwidth,
height=\plotwidth,
at={(\plotxthree,\plotyone)},
scale only axis,
axis on top,
xmin=-3.025,
xmax=3.025,
ymin=-3.025,
ymax=3.025,
xtick={\empty},
ytick={\empty},
axis background/.style={fill=white},
legend style={legend cell align=left, align=left, draw=white!15!black}
]
\addplot [forget plot] graphics [xmin=-3.025, xmax=3.025, ymin=-3.025, ymax=3.025] {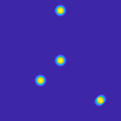};
\end{axis}

\begin{axis}[%
width=\plotwidth,
height=\plotwidth,
at={(\plotxfour,\plotyone)},
scale only axis,
axis on top,
xmin=-3.025,
xmax=3.025,
ymin=-3.025,
ymax=3.025,
xtick={\empty},
ytick={\empty},
axis background/.style={fill=white},
legend style={legend cell align=left, align=left, draw=white!15!black}
]
\addplot [forget plot] graphics [xmin=-3.025, xmax=3.025, ymin=-3.025, ymax=3.025] {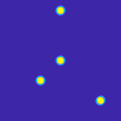};
\end{axis}

\node at (\plotxone+\plotwidth/2,\plotyone+\plotheight+0.08in) {\myfont{\footnotesize Original image}}; 

\node at (\plotxtwo+\plotwidth/2,\plotyone+\plotheight+0.08in) {\myfont{\footnotesize Lasso w/big $\lambda$}}; 

\node at (\plotxthree+\plotwidth/2,\plotyone+\plotheight+0.09in) {\myfont{\footnotesize Lasso w/small $\lambda$}}; 

\node at (\plotxfour+\plotwidth/2,\plotyone+\plotheight+0.08in) {\myfont{\footnotesize Reweight Lasso}};

\end{tikzpicture}%
	\caption{SECM image reconstruction with pure Lasso and reweighted Lasso. We apply three algorithm to reconstruct the image (left) with 6 line scans with simulated PSF in \Cref{fig:psf}. The reconstruction from Lasso with large $\lambda$ (mid left) has unbalanced magnitude due to the coherence of line scans, and from Lasso with small $\lambda$ (mid right) gives  blurry image by weakened sparsity regularizer. Reweighing Lasso can adjust the sparse regularizer in each iteration and consistently gives good result.}\label{fig:reweight}
\end{figure}

\begin{figure}[t!]
	\centering
	\input{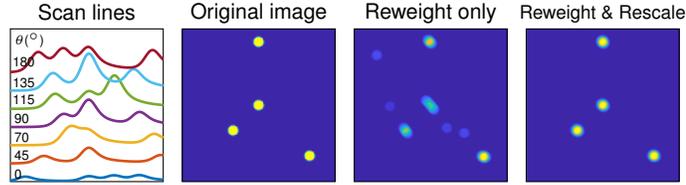}
	\caption{SECM image reconstruction with reweighed Lasso and reweighed calibrating Lasso. We simulate a line scan with uneven magnitude (left), and reconstruct the image (mid left) with two algorithm. The algorithm with reweighting only (mid right) cannot identify the correct support; where the reweighting plus calibration (right) method well approximates the image.}\label{fig:rescale}
\end{figure}

In \Cref{fig:reweight}, we display an example comparing reweighting to  the vanilla Lasso with different penalty variable in a noiseless scenario. When $\lambda$ is large, the reconstructed $\mb X$ does not recover correct relative magnitudes; when $\lambda$ is small, the effect of sparsity surrogate is weakened, resulting imprecise support recovery and offers blurry image. Using reweighting method correctly reconstruct the exact result. To show how the addressed modification in Lasso algorithm improves success rate of image reconstruction from line scans, we present a series of simulated experiments, comparing the reconstruction between the vanilla Lasso with different $\lambda$ settings and reweighting method in \Cref{fig:reweight_exps}.  Each data point consists of average of 30 experiments; in each of the experiment, the ground truth discs are generated at random with minimum separation (rejection sampling), which is then reconstructed from 8 random lines scans if disc number $<\!16$, or 16 lines scans when disc number $>\!16$. All discs are assumed to have equal magnitude. The correctness of the image reconstruction is measured by calculating the relative error between the pixel values of image, which measures the difference between normalized ground truth image and reconstructed image. The experiments show the reweighting method steadily outperforms the vanilla Lasso under various settings when measurements are incoherent.

\subsubsection{Blind calibration for incomplete  PSF information} 
Due to natural physical limitations, the incorrect estimation of PSF can be inevitable, especially in nanoscale. One remedy is to parameterize the PSF to accommodate possible variations; this can significantly improve the accuracy of reconstruction result. We assume $\mb\psi(\mb p_i)$ is a single instance of PSF with parameter $\mb p_i$, where the vector $\mb p_i$ can represent the peak value, the width of peak, and the rise/decay of PSF in \Cref{fig:psf} for the scan of angle $\theta_i$. For the reconstruction algorithm, we replace the PSF $\mb\psi$ in \eqref{eqn:lasso_l1} with the parameterized version $\mb\psi(\mb p_i)$, and optimize both the parameter $\mb p_i$ and the sparse map $\mb X$ via alternating minimization.   

\Cref{fig:rescale} exhibits a simulated example in which the PSF of line scans  has unbalanced magnitudes due to the variation of probe scanning angle. In this example, the line scan with largest overall magnitude is four times as much as the smallest, which shows the comparison of image reconstruction results from algorithm of reweighting or of reweighting plus rescaling calibration. The figures show the calibration achieves successful reconstruction while the former non-calibration method fell short on this simulated problem which has more than enough line scans are utilized to reconstruct a simplistic four disc example.

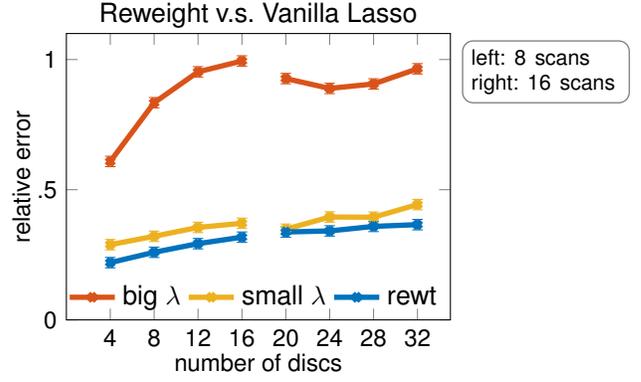
\begin{figure}[t!]
	\centering
%
%
\definecolor{mycolor1}{rgb}{0.85000,0.32500,0.09800}%
\definecolor{mycolor2}{rgb}{0.92900,0.69400,0.12500}%
\definecolor{mycolor3}{rgb}{0.00000,0.44700,0.74100}%
\newcommand{\plotwidth}{2.01in}
\newcommand{\plotheight}{1.5in}
\newcommand{\plotxone}{0in}
\newcommand{\plotxtwo}{0in}
\newcommand{\plotyone}{0in}

\begin{tikzpicture}
\sffamily{
\begin{axis}[%
width=\plotwidth,
height=\plotheight,
at={(\plotxone,\plotyone)},
scale only axis,
xmin=0,
xmax=35,
xtick={ 4,  8, 12, 16, 20, 24, 28, 32},
xticklabels={ 4,8,12,16,20,24,28,32},
x tick label style={font=\small},
ytick={0,0.5,1},
yticklabels={0,.5,1},
y tick label style={font=\small},
ymin=0,
ymax=1.1,
title={Reweight v.s. Vanilla Lasso},
title style={yshift=-0.1in},
ylabel={relative error},
xlabel={number of discs},
xlabel style={yshift=0.08in, font=\small},
ylabel style={yshift=-0.25in, font=\small}, 
axis background/.style={fill=white},
legend style={at={(1.00,0.00)}, anchor=south east, legend columns=-1, draw=none, fill=none, legend cell align=left, align=left}
]
\addplot [color=mycolor1, line width=2.0pt, mark=x, mark options={solid, mycolor1}]
 plot [error bars/.cd, y dir = both, y explicit]
 table[row sep=crcr, y error plus index=2, y error minus index=3]{%
4	0.60850504201953	0.02	0.02\\
8	0.834170875195968	0.02	0.02\\
12	0.952735068457424	0.02	0.02\\
16	0.994568801869832	0.02	0.02\\
};
\addlegendentry{$\text{big }\lambda$}

\addplot [color=mycolor2, line width=2.0pt, mark=x, mark options={solid, mycolor2}]
 plot [error bars/.cd, y dir = both, y explicit]
 table[row sep=crcr, y error plus index=2, y error minus index=3]{%
4	0.288952304386325	0.02	0.02\\
8	0.321089212132668	0.02	0.02\\
12	0.355266906200983	0.02	0.02\\
16	0.371033113950272	0.02	0.02\\
};
\addlegendentry{$\text{small }\lambda$}

\addplot [color=mycolor3, line width=2.0pt, mark=x, mark options={solid, mycolor3}]
 plot [error bars/.cd, y dir = both, y explicit]
 table[row sep=crcr, y error plus index=2, y error minus index=3]{%
4	0.219959784326033	0.02	0.02\\
8	0.259717054340004	0.02	0.02\\
12	0.292847841013869	0.02	0.02\\
16	0.317744665672588	0.02	0.02\\
};
\addlegendentry{rewt}

\addplot [color=mycolor1, line width=2.0pt, mark=x, mark options={solid, mycolor1}, forget plot]
 plot [error bars/.cd, y dir = both, y explicit]
 table[row sep=crcr, y error plus index=2, y error minus index=3]{%
20	0.92697449867536	0.02	0.02\\
24	0.889140783467807	0.02	0.02\\
28	0.906322026648134	0.02	0.02\\
32	0.964444688730697	0.02	0.02\\
};
\addplot [color=mycolor2, line width=2.0pt, mark=x, mark options={solid, mycolor2}, forget plot]
 plot [error bars/.cd, y dir = both, y explicit]
 table[row sep=crcr, y error plus index=2, y error minus index=3]{%
20	0.348269686234737	0.02	0.02\\ 
24	0.39520874386722	0.02	0.02\\
28	0.394459878094059	0.02	0.02\\
32	0.442988457861914	0.02	0.02\\
};  
\addplot [color=mycolor3, line width=2.0pt, mark=x, mark options={solid, mycolor3}, forget plot]
 plot [error bars/.cd, y dir = both, y explicit]
 table[row sep=crcr, y error plus index=2, y error minus index=3]{%
20	0.337624550858148	0.02	0.02\\
24	0.341741646642681	0.02	0.02\\
28	0.35915030383684	0.02	0.02\\
32	0.36589067261823	0.02	0.02\\
};
\end{axis} 
 
\node[rectangle, rounded corners, draw=black!50, line width=0.5pt, text width=0.78in, align=left] at (\plotwidth+0.5in,3.3) {\footnotesize left: 8 scans \\[-0.04in] right: 16 scans };

}

\end{tikzpicture}%
	\caption{Performance of reweighting method versus Lasso. We use 8 line scans when the disc number is below 16, and 16 line scans when disc number is above for reconstruction. The experiments show reweighting method outperforms vanilla Lasso with various penalty variable $\lambda$ setting  w.r.t.\ normalized (to $1$) magnitude difference between the ground truth images and reconstructed images.} 	\label{fig:reweight_exps}
\end{figure} 


\subsection{Image reconstruction algorithm}
Finally we  formally state the complete algorithm \Cref{alg:SECM-CLP-imaging}  for reconstruction of SECM image from line scans. The algorithm solves multiple iterations of 
\begin{align}\label{eqn:lasso-reweight-calibrate}
	&\min_{\mb X\geq 0,\mb p\in\mc P} \textstyle\sum_{ij}\mb \lambda_{ij}^{(k)}\mb X_{ij} \notag \\
	&\qquad +\,\textstyle\sum_{i=1}^m\tfrac12\norm{\mc S\{\mb\psi(\mb p_i)*\mc L_{\theta_i}\brac{\mb D*\mb X}\} - \mb R_i}2^2.
\end{align}
while updating the penalty variable $\mb\lambda^{(k)}$ in each iterate base on \eqref{eqn:lda_reweight}. To solve a single iterate of \eqref{eqn:lasso-reweight-calibrate}, the algorithm utilizes an accelerated alternating minimization method specifically for nonsmooth, nonconvex objectives called iPalm \cite{pock2016inertial} stated in \Cref{alg:ipalm}. Since this formulation is nonconvex and the gradients of objective \eqref{eqn:lasso-reweight-calibrate} could have large local Lipchitz constants, we adopt the backtracking method for choosing the step size of each individual gradient step. In our real data experiments, the analytic form of PSF $\wh{\mb\psi}(\mb p)$ is realized as a two-sided exponential decaying function. Define a one-side exponential-decay function as
\begin{align}
	\mc E_{c,\alpha}(t) = (c \!\cdot\! t + 1)^{-\alpha}\1_{\set{t>0}},\quad \wc{\mc E}_{c,\alpha}(t) = \mc E_{c,\alpha}(-t)
\end{align}
then $\wh{\mb \psi}(\mb p) = \wh{\mb\psi}(c_\ell,\alpha_\ell, c_r,\alpha_r,\sigma)$ is defined as
\begin{align}
	\wh{\mb\psi}(\mb p) \,=\, \brac{\, \wc{\mc E}_{c_\ell,\alpha_\ell}\, +\, \mc E_{c_r,\alpha_r} \,} * f_{\sigma}
\end{align}  
where $f_\sigma$ is zero-mean Gaussian function with deviation $\sigma$.


\begin{algorithm}[t!]\myfont{	
\caption{ $\mr{iPalm}(\mb X_{\mr{init}},p_{\mr{init}},\mb\lambda, h,\mc P)$: Accelerated iPalm for calibrating sparse regression }\label{alg:ipalm}
	\begin{algorithmic}	
		\Require Initialization $\mb X_{\mr{init}}\in\R^{n\times n}$ and $ \mb p_{\mr{init}}\in\mc P$, sparse penalty $\mb\lambda\in\R^{n\times n}$, smooth function $h$, and number of iterations $L$.
		\vspace{0.1cm}  
		\State Let ${\mb X}^{(0)} \gets  \mb X_{\mr{init}}$;\, $\mb p^{(0)}\gets p_{\mr{init}}$; $\alpha\gets 0.9$; $t_{X0},t_{p0}\gets 1$   
		\For{$\ell =1,\ldots, L $ }
		\vspace{0.10cm}
		\State // Accelerated Proximal Gradient for map $\mb X$.  
		\State $\mb Y^{(\ell)} \gets {\mb X}^{(\ell)}  + \alpha\,({\mb X}^{(\ell)} - {\mb X}^{(\ell-1)})$;\;\; $t \gets t_{X0}$;
		\Repeat{}
		\State{$t\gets t/2$};
		\State  $\mb X^{(\ell+1)}\gets\mr{Soft}^+_{t\mb \lambda}\brac{\mb Y^{(\ell)}-t\,\partial_{\mb X}h(\mb Y^{(\ell)},\mb p^{(\ell)}) } $; 
		\Until{ $h(\mb X^{(\ell+1)},\mb p^{(\ell)}) \leq   h(\mb Y^{(\ell)},\mb p^{(\ell)}) $}
		\State  $ \qquad \qquad\qquad\qquad +  \innerprod{\partial_{\mb X} h(\mb Y^{(\ell)},\mb p^{(\ell)})}{\mb X^{(\ell+1)}-\mb Y^{(\ell)}}$  
		\State $\qquad \qquad \qquad \qquad + \tfrac{1}{2t} \| \mb X^{(\ell+1)}-\mb Y^{(\ell)}\|_2^2$; 
		\State $t_{X0}\gets 4t$;
		\vspace{0.10cm}
		\State // Accelerated Proximal Gradient for parameters $p$.
		\State $\mb q^{(\ell)}\gets \mb p^{(\ell)} +\alpha\,(\mb p^{(\ell)}-\mb p^{(\ell-1)})$;\;\; $t\gets t_{p0}$;
		\Repeat{} 
		\State $t\gets t/2$;
		\State $\mb p^{(\ell+1)} \gets \mr{Proj}_{\mc P}\brac{\mb q^{(\ell)} - t\,\partial_p h(\mb X^{(\ell+1)},\mb q^{(\ell)}) }$;
		\Until{ $h(\mb X^{(\ell+1)},\mb p^{(\ell+1)}) \leq   h(\mb X^{(\ell+1)},\mb q^{(\ell)}) $}
		\State  $ \qquad \qquad\qquad\qquad +  \innerprod{\partial_p h(\mb X^{(\ell+1)},\mb q^{(\ell)})}{\mb p^{(\ell+1)}-\mb q^{(\ell)}}$  
		\State $\qquad \qquad \qquad \qquad + \tfrac{1}{2t} \| \mb p^{(\ell+1)}-\mb q^{(\ell)}\|_2^2$; 
		\State $t_{p0}\gets 4t$;
		\EndFor 
		\vspace{0.08cm}
	\Ensure $(\mb X^{(L)},\mb p^{(L)})$ as the approximated minimizers of $\min_{\mb X\geq 0,\mb p\in\mc P} \sum_{ij}\mb\lambda_{ij}\mb X_{ij} + h(\mb X,\mb p) $  
	\end{algorithmic}}
\end{algorithm}

\begin{algorithm}[t!]\myfont{
\caption{Reconstruct SECM image with line scans via reweighted iPalm.}\label{alg:SECM-CLP-imaging}

	\begin{algorithmic}
		\Require Line scans $\set{\mb R_i}_{i=1}^m$, scan angles $\set{\theta_i}_{i=1}^m$, profile $\mb D$, estimated psf $\wh{\mb\psi}$, initial guess of  parameters $\mb p_{\mr{init}}\in\mc P$ convex, and number of iterations $K$. \vspace{0.1cm} 
		\State Let $\mb X^{(0)} \gets \mb 0$, $\,\,\mb p^{(0)}\gets \mb p_{\mr{init}}$,\vspace{0.1cm}   
		\State Let $h(\mb X,\mb p) \gets \textstyle\sum_{i=1}^m\tfrac12\|\wh{\mb\psi}*\mc L_{\theta_i}[\mb p]\brac{\mb D*\mb X} - \mb R_i\|_2^2$;\vspace{0.1cm}  
		\For{$k = 1,\ldots, K$}
		\If{$k=1$}
			$\mb\lambda \gets C\max_{ij}\set{\mc L_\Theta^*\big[\wh{\mb\psi}*\mb R\big]_{ij}}\cdot \1  $;
		\Else
			\State $\forall\,i,j\in[n]$, $\mb \lambda^{(k)}_{ij}\gets Ch(\mb X^{(k)},\mb p^{(k)})/(\mb X^{(k-1)}_{ij} + \eps)$;
		\EndIf 
		\vspace{0.1cm}
		\State $(\mb X^{(k+1)},\mb p^{(k+1)}) \gets \mr{iPalm}(\mb X^{(k)},\mb p^{(k)},\mb\lambda,\, h,\mc P)$;
		\EndFor
		\Ensure Reconstructed image $\mb Y\gets \mb D *\mb X^{(K)} $
	\end{algorithmic}}
\end{algorithm}

\section{Real data experiments}\label{sec:real-data}

\begin{figure}[t!]
\centering
\input{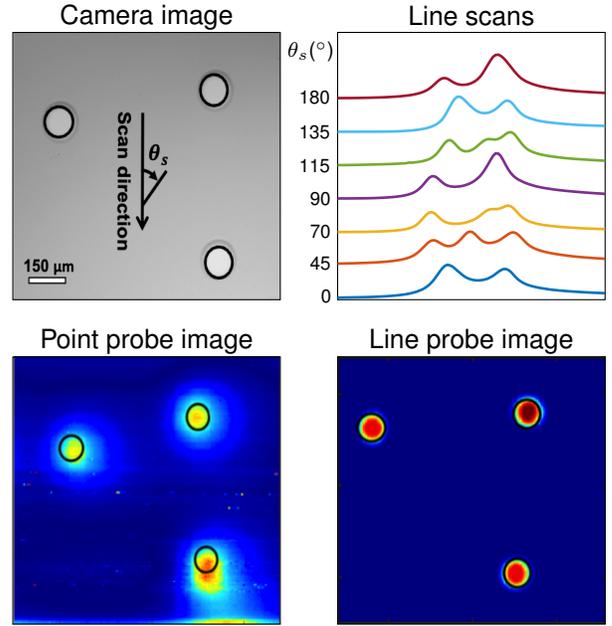}
\caption{Real signal experiments on three platinum discs \cite{dorfi2019instrument}. We show the reconstruction result of a three disc sample (up-left), which is scanned with line probe in seven different directions (up-right). The arrow in optical image represents the line probe sweeping direction, while as $\theta_s$ stands for clockwise rotation of the sample. The black circle indicates the correct disc location in each images. Compare to the point probe, in which the shifts of disc location are resulted from the skew of PSF (down-left), our line scan reconstruction accurately recovers the exact location (down-right). For both of the reconstructed images, the resolution is $10\mr{\mu m}$ per pixel.} \label{fig:3dots} 
\end{figure}

We present two sets of experiments to demonstrate an end-to-end result of line probe SECM. 

\Cref{fig:3dots} displays the comparison of the line probe/point probe scan on a simplistic three disc samples ($75\mr{\mu m}$ in radius, platinum). In these experiments, the point probe tip diameter and the line probe edge thickness are equivalent ($\approx\! 20 \mr{\mu m}$), and the probe moving speed ($100\mr{ms}$), the sampling rate ($10\mr{\mu m}$), and the probe end material (platinum) are identical as well. Four images are shown here, including the optical closeup image for the three  discs, the line scans, and the reconstruction image of either point probe or the line probe. In the optical image, the arrow (scan direction) represents the line probe sweeping direction when $\theta_s = 0^\circ$, which generates the  $0^\circ$ line scan. The three discs sample is then rotated by $\theta_s$ ($45^\circ$ in this case) clockwise, proceeds with another sweep of line probe, produces the $45^\circ$ line scans. This routine continues  until all seven scans are carried out. 

In the reconstructed images, the black circles indicate the ground truth size and location of the platinum discs derived from the optimal image. The reconstruction algorithm \Cref{alg:SECM-CLP-imaging} is setup with 6 reweighting iterates, where each iterates runs 50 iterates of iPalm. We can see the reconstructed result from the point probe exhibits distortion in the image due to the skewness of probe PSF along its proceeding direction during raster scans; while the image of line scan reconstruction presents three circular features with its size and locations are agreeing with the ground truth, since the skewness of PSF has been successfully corrected by the reconstruction algorithm.

In \Cref{fig:8_10dots}, we reconstruct images of samples consisting of platinum discs arranged in more complicated configurations. Two  experiments are presented here, which are the samples consisting of 8 or 10 discs, while the disc diameter/image resolution/probe dimension/sampling rate are all identical to the three discs case in \Cref{fig:3dots}. The reconstruction algorithm are also set up similarly, with reweighting (ipalm) procedures with 6(50) iterates, generating the images of interest of much larger dimension. Notice that here we use 7(9) line scans on 8(10) disc sample respectively, and demonstrate both of the resulting reconstructed image and the location map, in which the location map is a binary image defined by $\1_{\set{\mb X_{ij}\geq 0.5\norm{\mb X}\infty}}$ at $(i,j)$-th entry.
 
 We can see for these more complicating images, our algorithm are still able to reconstruct the image of platinum discs with correct location and shape. The corresponding location maps are  approximately recovered, with most of the  discs locations are represented by a single one-sparse vector, and some other locations are represented by a two-sparse vector due to the inevitable discretization error.   

Our code for the reconstruction of SECM image from scans from line probe can be found via the following link:

\vspace{0.03in}
\begin{center}
	\noindent\texttt{\url{https://github.com/clpsecm/clpsecm_imaging}}
\end{center}

\begin{figure*}[t!]
\centering
\input{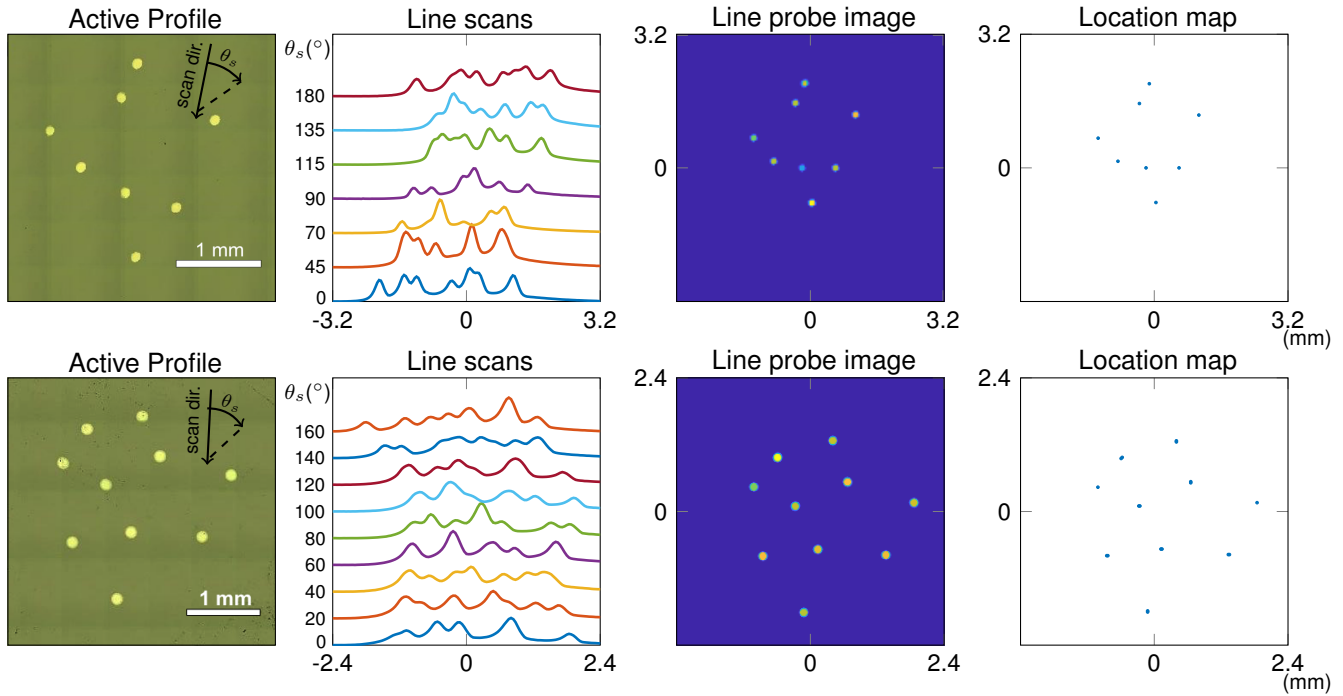}
\caption{Real signal experiments of 8, 10 platinum discs. Showing the optimal image of the 8 discs (up) and 10 discs (down) sample, and their corresponding line scans, reconstructed image and reconstructed disc location map. In optical image, the arrows represent the line probe sweeping direction, while as $\theta_s$ stands for clockwise rotation of the sample. In both examples, our algorithm is able to successfully obtain these images of the discs, with most of the disc locations can be approximately represented by an one-sparse vector. Here, the image resolution is $20 \mr{\mu m}$ per pixel. }  \label{fig:8_10dots}
\end{figure*}

\section{Summary \& Discussion}
This paper presents the development of a novel scanning probe microscope technique based on line measurements. The microscope obtains line integrals in each measurement, such measurements are non-local, hence more efficient then conventional raster scans for microscopic image with localized sparse structure. This paper shows the improved efficiency of line probe via rudimentary analysis and experiments; and proposes a simple modification in conventional CS algorithm for image reconstruction, with its effect on both the simulated and the actual datasets. Due to the strong relation between  computational tomography and line scans, we also view our work can potentially being applied to ares of CT or other similar imaging modalities involving the use of projection measurements.

We envision multiple possibilities for future work. First, the current studied microscopic images are circumscribed in sparse convolutional model; while it has an immediate access to applications such as lattice structure imaging in material science, we aim to expand the potential application of line scans to more general imaging problems. Furthermore, unlike many other imaging modalities, in SPM the design of probe topography (i.e. the sampling pattern) is not limited to a straight line,  therefore it is possible adopt various different probe design to achieve CS-like sample reduction. Lastly, in this paper we have shown via simple reasoning and experiments to exhibit the relationship between the complexity of image and the required number of line scan measurements to achieve exact reconstruction. We consider rigorously demonstrating the relationship can also be an interesting direction in CS, especially since the line scans are not the CS optimal measurement model.


\nocite{*}


\end{document}